\title{On the compressiveness of the Burrows-Wheeler transform}
\author{Hideo Bannai}{M\&D Data Science Center, Institute of Integrated Research, Institute of Science Tokyo, Japan}{hdbn.dsc@tmd.ac.jp}{https://orcid.org/0000-0002-6856-5185}{JSPS KAKENHI Grant Number JP24K02899}
\author{Tomohiro I}{Kyushu Institute of Technology, Japan}{tomohiro@ai.kyutech.ac.jp}{https://orcid.org/0000-0001-9106-6192}{}
\author{Yuto Nakashima}{Department of Informatics, Kyushu University, Japan}{nakashima.yuto.003@m.kyushu-u.ac.jp}{https://orcid.org/0000-0001-6269-9353}{JSPS KAKENHI Grant Number JP21K17705 and JP23H04386}
\authorrunning{H. Bannai, T. I, Y. Nakashima}
\keywords{Data Compression, Bijective Burrows-Wheeler Transform, Fibonacci words} 
\newcommand{\rank}{\mathsf{rank}}
\newcommand{\rot}{\mathsf{rot}}
\newcommand{\bwt}{\mathsf{BWT}}
\newcommand{\bbwt}{\mathsf{BBWT}}
\newcommand{\source}{\mathit{s}}
\newcommand{\pos}{\mathsf{pos}}
\begin{document}

\maketitle

\begin{abstract}
    The Burrows-Wheeler transform (BWT) is a reversible transform that converts a string $w$ into another string $\bwt(w)$.
    The size of the run-length encoded BWT (RLBWT) can be interpreted as a measure of repetitiveness in the class of representations called dictionary compression
    which are essentially representations based on copy and paste operations.
    In this paper, we shed new light on the compressiveness of BWT
    and the bijective BWT (BBWT).
    We first extend previous results on the relations of their run-length compressed sizes $r$ and $r_B$.
    We also show that the so-called ``clustering effect'' of BWT and BBWT can be
    captured by measures other than empirical entropy or run-length encoding.
    In particular, we show that
    BWT and BBWT do not increase the repetitiveness of the string with respect to
    various measures based on dictionary compression by more than a polylogarithmic factor.
    Furthermore, we show that there exists an infinite family of strings
    that are maximally incompressible by any dictionary compression measure,
    but become very compressible after applying BBWT.
    An interesting implication of this result is that it is possible
    to transcend dictionary compression in some cases by simply
    applying BBWT before applying dictionary compression.
\end{abstract}

\setcounter{page}{0}
\clearpage
\section{Introduction}
The Burrows-Wheeler Transform (BWT)~\cite{bwt94} is a reversible mapping from a string to another string
that enables compression and efficient pattern search,
and is the theoretical cornerstone for essential tools in the field of bioinformatics~\cite{bowtie09,10.1093/bioinformatics/btp324}.
The compressibility of BWT has been studied in various contexts,
but more recently,
rather than statistical measures such as empirical entropy which are not helpful in highly repetitive datasets,
the size $r$ of the run-length encoded BWT (RLBWT) and its relation to the many other
repetitiveness measures related to dictionary compression has become an important topic of study
(See~\cite{DBLP:journals/csur/Navarro21} for a comprehensive survey).

Dictionary compression is a family of compressed representations which are essentially based on copy and paste operations.
Kempa and Prezza~\cite{DBLP:conf/stoc/KempaP18} proposed the notion of
    {\em string attractors} to
view dictionary compression in a uniform way,
and showed that the size $\gamma$ of the smallest string attractor
lower bounds all other measures of dictionary compression, since they implicitly give a
string attractor of the same size as their representations:
the size $b$ of the smallest bidirectional macro scheme (BMS)~\cite{DBLP:journals/jacm/StorerS82},
the size $z$ of the LZ77 parsing~\cite{DBLP:journals/tit/ZivL77},
the size $g_{\mathrm{rl}}$ of the smallest grammar with run-length rules,
the size $g$ of the smallest grammar,
$r$, and more.
Another measure $\delta$ based on substring complexity is known to lowerbound $\gamma$~\cite{9961143}.

Concerning $r$ in relation to the other measures,
Navarro et al.~\cite{DBLP:journals/tit/NavarroOP21} showed that
given an RLBWT of size $r$, we can construct a BMS of size $2r$,
thus showing $b = O(r)$.
Kempa and Kociumaka~\cite{DBLP:journals/cacm/KempaK22} showed $r = O(z\log^2 n)$,
and further $r=O(\delta\log^2 n)$\footnote{More precisely, $r=O(\delta\log \delta\max(1,\log\frac{n}{\delta\log \delta}))$}.
Also, $z = O(b\log n)$~\cite{DBLP:journals/tit/NavarroOP21} has been shown,
and therefore $z = O(r \log n)$ or, $r  = \Omega(z/\log n)$.
Recently, the size $r_B$ of the run-length encoded bijective BWT (RLBBWT) has been studied~\cite{DBLP:conf/ictcs/0002CLR23,DBLP:conf/spire/BadkobehBK24},
and was shown, similarly to $r$, that $r_B = O(z\log^2 n)$~\cite{DBLP:conf/spire/BadkobehBK24}. Also, the existence of an infinite family of strings such that $r = o(r_B)$, namely, $r = O(1)$ and $r_B = \Omega(\log n)$ was shown~\cite{DBLP:conf/spire/BadkobehBK24}. The existence of the opposite case,
i.e., if there are string families for
which $r_B = o(r)$, was left open.

The ``clustering effect'' of the BWT in terms of the run length encoding was studied by Mantaci et al.~\cite{DBLP:journals/tcs/MantaciRRSV17},
where they
proved that BWT can increase the size of the run-length encoding of the string by a factor of at most $2$.
The maximal clustering effect for run-length encoding can be seen for example with the Fibonacci words: Fibonacci words of length $n$
have a run-length encoding of size $\Theta(n)$,
while $r=2$.
While this is remarkable when viewed from run-length encoding
since a maximally incompressible string (in terms of rle) is
transformed into a compressible one,
in terms of the smallest bidirectional macro scheme,
Fibonacci words and their BWT have the same size $4$,
and one might argue that BWT is merely transforming a compressible string into another compressible string.

In this paper, we extend
previous results on the relation between $r$ and $r_B$.
We also investigate the clustering effect of BWT and BBWT in terms of other repetitiveness measures
and show that they can be much more powerful than previously perceived,
giving an example where
BBWT transforms a maximally incompressible string
(in terms of dictionary compression)
into a compressible one.

The contributions of this paper are as follows.
\begin{enumerate}
    \item
          We show an infinite family of strings such that
          $r = \Omega(\log n)$ and $r_B = O(1)$,
          answering affirmatively an open question about the existence of
          a family of strings where $r_B = o(r)$ raised by Badkobeh et al.~\cite{DBLP:conf/spire/BadkobehBK24}.

    \item We show a polylogarithmic ($O(\log^4 n)$) upper bound on the multiplicative difference between $r$ and $r_B$.

    \item We analyze the clustering effect of BWT and BBWT in terms of other repetitiveness measures,
          and show that the application of BWT and BBWT can only increase various repetitiveness measures of a string by a polylogarithmic factor.
    \item
          We show that BBWT on the strings whose BBWT
          images are Fibonacci words
          exhibit a maximal clustering effect in terms of the repetitiveness measures
          $\delta$, $\gamma$, $b$, $v$, or $r$,
          from almost linear ($\Theta(n/\log n)$) to constant.          The main combinatorial part of our proof is an elegant characterization of the LF mapping function on Fibonacci strings that uses the Zeckendorf representation of the positions.

          As a byproduct, this result shows that
          it is possible in some cases
          to transcend dictionary compression
          by simply applying BBWT before applying dictionary compression.
\end{enumerate}

\section{Preliminaries}
Let $\Sigma$ denote a set of symbols called the alphabet.
A string is an element of $\Sigma^*$.
If $w = xyz$ for any strings $w,x,y,z$, then,
$x$, $y$, $z$ are respectively a
    {\em prefix}, {\em substring}, and {\em suffix} of $w$ and are a {\em proper} prefix, substring, or suffix
if they are not equal to $w$.
The length of string $x$ is denoted by $|x|$.
The empty string, i.e., the string of length $0$ is denoted by $\varepsilon$.
The symbol at position $i$ will be denoted by $x[i]$, and we will use a $0$-based index, i.e.,
$x = x[0]\cdots x[|x|-1]$.
For integers $i,j\in [0,|x|)$, let $x[i..j] = x[i]\cdots x[j]$ if $i \leq j$
and $x[i..j] = \varepsilon$ if $i > j$. We will also use $x[i..j) = x[i..j-1]$.
For any string $x$, $x^0 = \varepsilon$,
and for any integer $i \geq 1$, $x^i = x^{i-1}x$.
A string $w$ is {\em primitive} if it cannot be represented as $w=x^k$ for some string $x$ and integer $k\geq 2$.
For any symbol $c\in\Sigma$ let $|x|_c = |\{ i \mid x[i] = c\}|$,
i.e., the number of $c$'s in $x$.

For any string $x$, let $\rot(x) = x[1..|x|)x[0]$,
denote a left cyclic rotation of $x$ by one symbol.
Notice that for any integer $i$, $\rot^i(x)$ naturally corresponds to the cyclic
rotation of $x$ by $i$ symbols to the left if $i > 0$, and to the right if $i < 0$.
We will use $\rot^*(x)$ to denote the lexicographic smallest rotation of $x$.
A {\em conjugacy class} is an equivalence class of the equivalence relation defined by
$x\equiv y \iff \rot^*(x) = \rot^*(y)$.

A string $w$ is a {\em Lyndon word} if it is lexicographically smaller than any of its,
proper rotations, i.e., $w < \rot^k(w)$ for any $k\in [1,|w|)$.
From the definition, it is clear that a Lyndon word must be primitive.
A rotation of a primitive string $x$ is always primitive and $\rot^*(x)$ is Lyndon and unique, which we will
sometimes refer to as the Lyndon rotation of $x$.
It is known that Lyndon words cannot have a non-empty proper border
(a proper prefix that is also a proper suffix).
The Lyndon factorization~\cite{chen58lyndon}
of a string $w$ is a unique partitioning of $w$
into a sequence of non-increasing Lyndon words,
i.e., $w = L^{e_1}_1 \cdots L^{e_{\ell(w)}}_{\ell(w)}$
where each $L_i~(i\in [1,\ell(w)])$,
which we will call {\em Lyndon factors} of $w$,
is a Lyndon word and $L_{i} > L_{i+1}$ for $i \in [1,\ell(w))$.
It is known that any occurrence of a Lyndon word
in $w$ must be a substring of a Lyndon factor of $w$.

The $\omega$-order $<_\omega$
between primitive strings
or same length strings
$x,y$ is defined as $x<_\omega y \iff x^\infty < y^\infty$.\footnote{As we will not be comparing non-primitive strings of different lengths in this paper, the definition here is a simplified version of the original.}
The $\omega$-order can be different from the standard lexicographic order,
but are identical when comparing strings of the same length or when comparing Lyndon words.

For a string $x[0..|x|)$, let $\rank_{c}(i,x) = |x[0..i)|_c$,
i.e., the number of symbols $c$ in $x[0..i)$.
Let $\rho(x)$ denote the size of the run-length encoding of $x$, i.e.,
the maximal number of same symbol runs in $x$.
Let $\rho_c(x)$ denote the number of runs of symbol $c$ in the run-length encoding of $x$.

\subsection{Repetitiveness Measures}
A set of positions $\Gamma$ is a string attractor~\cite{DBLP:conf/stoc/KempaP18} of $w$ if any substring of $w$ has an occurrence in $w$ that covers a position in $\Gamma$.
The size of the smallest string attractor of $w$ is denoted by $\gamma(w)$.
The measure $\delta$~\cite{9961143} is defined as $\max_{k\in[1,|w|]} S_k/k$, where $S_k$ is the number of distinct length-$k$ substrings of $w$.

The Burrows-Wheeler transform (BWT)~\cite{bwt94} $\bwt(w)$ of a string $w$ is defined as the
sequence of last (or equivalently, previous, in the cyclic sense) symbols
of all rotations of $w$, in lexicographic order of the rotations.
The size of the run-length encoding of $\bwt(w)$ will be denoted by $r(w)$,
i.e., $r(w) = \rho(\bwt(w))$.
The bijective BWT (BBWT)~\cite{DBLP:journals/corr/abs-1201-3077}
$\bbwt(w)$ of a string $w$ is defined as the
sequence of last (or again, previous, in the cyclic sense) symbols of all the rotations of all the Lyndon factors of $w$, in $\omega$-order of the rotations.
Actually, BWT can be understood as a special case of
BBWT: $\bwt(w) = \bbwt(\rot^*(w))$
because $\rot^*(w) = L^e$ for some Lyndon word $L$ and integer $e = |w|/|L|$,
and the $\omega$-order is equivalent to lexicographic order in this case
since all the compared strings are of the same length.
The size of the run-length encoding of $\bbwt(w)$ will be denoted by $r_B(w)$,
i.e., $r_B(w) = \rho(\bbwt(w))$.

The inverse transform of $\bwt$ and $\bbwt$ on a string $x$ can be defined by the {\em LF mapping},
which is a function $\Psi_x(i) = j$
over positions of $x$
where $c = x[i] = s[j]$, $\rank_{c}(i,x) = \rank_{c}(j,s)$,
and $s$ is the string obtained by sorting the multiset of symbols of $x$ in increasing order.
$\Psi_x$ is a permutation and thus forms cycles on the set of positions of $x$.
A cycle $(i, \Psi^1(i), \cdots, \Psi^{k-1}(i))$
where $k$ is the smallest positive integer such that $\Psi^{k}(i) = i$,
corresponds to a (cyclic) string $x[\Psi^{k-1}(i)]\cdots x[\Psi^1(i)] x[i]$.
This string is always primitive, and by concatenating,
in non-increasing order, all the Lyndon rotations of the strings
corresponding to all cycles, it can be shown that $\bbwt^{-1}(x)$ is obtained.
$\Psi_x$ can be interpreted as returning, given the $\omega$-order rank
of a given rotation of a cycle, the $\omega$-order rank of the previous rotation of the cycle.
Note that when $x$ is a $\bwt$ image of a primitive string, $\Psi_x$ consists of only a single cycle and that although the Lyndon rotation will give the string $w$ for which $\bbwt(w) = x$,
we have $\bwt(w') = x$ for any rotation $w'$ of $w$.

A Bidirectional Macro Scheme (BMS)~\cite{DBLP:journals/jacm/StorerS82} of a string $w$
is a partitioning of $w$ into phrases,
where each phrase is either a single symbol, or is a substring that
has an occurrence elsewhere in $w$ which we call the source, or the reference of the phrase.
The references of the phrases must
be such that the induced reference for each position in the phrases
are acyclic, i.e., the referencing on the position forms a
forest where the roots are the positions corresponding to single symbol phrases.
The size of the smallest bidirectional macro scheme for $w$ is denoted by $b(w)$.

The LZ77 factorization~\cite{DBLP:journals/tit/ZivL77} of a string $w$ is a BMS
of $w$ where all references are left-referencing, i.e.,
they must point to a smaller position,
and the phrases are determined greedily from left-to-right.
It is known that LZ77 is the smallest among left-referencing BMS.
The size of LZ77 of $w$ is denoted by $z(w)$.

The lex-parse~\cite{DBLP:journals/tit/NavarroOP21}
of a string $w$ is a BMS of $w$ where all references point to a rotation
of smaller lexicographic (or equivalently $\omega$-order) rotation,
and the phrases are determined greedily from left-to-right.
Similarly, it is known that lex-parse is the smallest among BMS with such constraint.
The size of lex-parse of $w$ is denoted by $v(w)$.

\subsection{Fibonacci Words}

Since we will later use Fibonacci words or their slight modifications
to show some of our results, we introduce them here.

The Fibonacci words are defined recursively as:
$F_0 = \mathtt{b}$, $F_1 = \mathtt{a}$, and for any integer $i \geq 2$, $F_i = F_{i-1} F_{i-2}$.
Fibonacci words can also be defined via a morphism $\phi$ defined as
$\phi(\mathtt{a}) = \mathtt{ab}$, and $\phi(\mathtt{b}) = \mathtt{a}$,
and $F_i = \phi^i(\mathtt{b})$ for all $i \geq 0$.
The lengths of the Fibonacci words correspond to the Fibonacci sequence,
i.e., $f_0 = f_1 = 1$, and $f_i = f_{i-1}+f_{i-2}$.
For technical reasons, we define $f_i = 0$ for $i < 0$.
The observation below follows from a simple induction.
\begin{observation}\label{observation:NumberOfAorB}
    For any $i \geq 1$,
    $|F_i|_\mathtt{a} = f_{i-1}$, and
    $|F_i|_\mathtt{b} = f_{i-2}$.
\end{observation}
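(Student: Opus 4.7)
The plan is to prove both equalities simultaneously by strong induction on $i$. I would first handle the small cases $i = 1$ and $i = 2$ directly from the definitions $F_0 = \mathtt{b}$, $F_1 = \mathtt{a}$, and $F_2 = F_1 F_0 = \mathtt{ab}$, combined with the initial values $f_0 = f_1 = 1$ and the convention $f_{j} = 0$ for $j < 0$. The only subtlety in the base step is confirming that $|F_1|_\mathtt{b} = 0$ matches $f_{-1} = 0$, which is exactly why the convention on negative indices was introduced in the preliminaries.

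For the inductive step with $i \geq 3$, I would exploit the fact that, since $F_i = F_{i-1} F_{i-2}$ is a pure concatenation, symbol counts are additive: $|F_i|_c = |F_{i-1}|_c + |F_{i-2}|_c$ for each $c \in \{\mathtt{a}, \mathtt{b}\}$. Applying the inductive hypothesis to the two terms yields $|F_i|_\mathtt{a} = f_{i-2} + f_{i-3}$, which collapses to $f_{i-1}$ by the Fibonacci recurrence; the argument for $\mathtt{b}$ is identical, giving $f_{i-3} + f_{i-4} = f_{i-2}$.

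There is no real obstacle here: the argument amounts to transferring the Fibonacci recurrence on lengths to a Fibonacci recurrence on per-symbol counts, which works precisely because the generating operation is concatenation. One could equivalently frame the proof via the morphism description $F_i = \phi^i(\mathtt{b})$, observing that $\phi$ sends one $\mathtt{a}$ to one $\mathtt{a}$ and one $\mathtt{b}$, and one $\mathtt{b}$ to one $\mathtt{a}$, which again produces the two-term linear recurrence on the count vector $(|F_i|_\mathtt{a}, |F_i|_\mathtt{b})$ and yields the claim.
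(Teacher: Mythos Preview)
Your proposal is correct and matches the paper's approach: the paper simply states that the observation ``follows from a simple induction'' without elaborating, and your induction on $i$ using the additivity of symbol counts under concatenation is precisely that argument. The only (very minor) remark is that ordinary induction suffices here rather than strong induction, since the step at $i$ uses only the two immediately preceding cases $i-1$ and $i-2$; but this is a matter of phrasing, not substance.
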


\section{New results for $r$ vs $r_B$}
\subsection{Strings family giving $r(w)/r_B(w) =\Omega(\log n)$}
Here, we answer an open question raised by Badkobeh et al.~\cite{DBLP:conf/spire/BadkobehBK24}, and show that
there exists an infinite family of strings such that
$r_B$ is asymptotically strictly smaller than $r$.

\begin{restatable}{theorem}{rBSmallerthanR}\label{theorem:rB_smallerthan_r}
    There exists an infinite family of strings such that $r = \Omega(r_B\log n)$.
\end{restatable}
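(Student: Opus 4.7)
The plan is to exhibit an explicit infinite family of strings $\{w_n\}$ satisfying $r_B(w_n) = O(1)$ and $r(w_n) = \Omega(\log n)$; the claim then follows immediately. The asymmetry to exploit is that $\bwt$ sorts \emph{all} rotations of $w$ together, whereas $\bbwt$ first partitions the rotations by Lyndon factor and only then merges them in $\omega$-order. So the target family must admit a nontrivial Lyndon factorisation under which the per-factor rotations cluster their last symbols into $O(1)$ blocks after $\omega$-merging, while the global rotations of $w_n$ scatter the last symbols across many lexicographic positions.

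A natural way to meet the $r_B$ side of this is to prescribe the target $t_n := \bbwt(w_n)$ directly with $\rho(t_n) = O(1)$ and recover $w_n$ via the inverse transform: each cycle of $\Psi_{t_n}$ yields a Lyndon rotation, and the non-increasing concatenation of those Lyndon words reconstructs $w_n$. The simplest two-run choice $t_n = a^p b^q$ already gives $\Psi_{t_n}$ equal to the identity and so $w_n = b^q a^p$ with $r(w_n) = 4$, only a constant ratio. A genuinely logarithmic ratio requires a richer $t_n$, whose $\Psi_{t_n}$ produces Lyndon cycles at $\Theta(\log n)$ geometrically distinct scales; this causes the Lyndon factors of $w_n$ to grow geometrically in length and should force the lex-sorted rotations of $w_n$ to alternate their last symbols between consecutive scales.

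Concretely I would (i) define $t_n$ with a hierarchical run structure of $O(1)$ runs but with exponents in geometric progression, (ii) compute the cycles of $\Psi_{t_n}$ explicitly to identify $w_n$ and read off $r_B(w_n) = \rho(t_n) = O(1)$, and (iii) establish $r(w_n) = \Omega(\log n)$ by locating, at each of the $\Theta(\log n)$ scales, a pair of rotations of $w_n$ that must appear consecutively in sorted order yet end with different symbols, each such pair contributing one run boundary in $\bwt(w_n)$. The hard part is step (iii): turning the hierarchical cycle structure of $\Psi_{t_n}$ into explicit witness pairs of rotations and ruling out cancellations between scales. Combinatorial techniques in the spirit of the Zeckendorf-based analysis of the LF mapping on Fibonacci strings alluded to later in the paper should provide the right framework for pinpointing these pairs.
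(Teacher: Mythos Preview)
Your proposal is a plan rather than a proof: you never commit to a concrete family, and step~(iii) is left as ``the hard part'' with no argument supplied. More importantly, the mechanism you are reaching for---$\Theta(\log n)$ Lyndon cycles at geometrically separated scales---is not the one that works here, and it is not clear how to realise it with only $O(1)$ runs in $t_n$.

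The paper's construction is much more direct. Take $w_k = \mathtt{b}\,F^*_{2k}$, the Lyndon rotation of the even Fibonacci word with a single $\mathtt{b}$ prepended. Its Lyndon factorisation has exactly \emph{two} factors, so there is no hierarchy of scales at all. For $r_B$: since $\bwt(F_{2k}) = \mathtt{b}^{f_{2k-2}}\mathtt{a}^{f_{2k-1}}$ is classical and the extra factor $\mathtt{b}$ is $\omega$-larger than every rotation of $F^*_{2k}$, one gets $\bbwt(w_k) = \mathtt{b}^{f_{2k-2}}\mathtt{a}^{f_{2k-1}}\mathtt{b}$ and $r_B(w_k)=3$. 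For $r$: $w_k$ turns out to be a rotation of strings already analysed by Giuliani et al.\ to have BWT with exactly $2k$ runs (the paper also gives a self-contained morphism-based proof of $r(w_k)=2k$). In your inverse-$\bbwt$ framing this corresponds to choosing $t_n = \mathtt{b}^{f_{2k-2}}\mathtt{a}^{f_{2k-1}}\mathtt{b}$, whose $\Psi$ has only two cycles, of lengths $f_{2k}$ and $1$---so the logarithmic blow-up in $r$ comes not from multi-scale Lyndon structure but from a single extra symbol disrupting the global rotation order of a Fibonacci word. Your guiding heuristic thus points away from the actual construction, which is both the conceptual and the technical gap in the proposal.
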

\begin{proof}
    Consider string
    $w_k = \mathtt{b}F^*_{2k}$, where $F^*_{2k} = \rot^*(F_{2k})$.
    Then, for any $k\geq 3$,
    $r_B(w_k) = 3$ (\Cref{lemma:rBConstant}),
    and $r(w_k) = 2k$ (\Cref{corollary:rLogarithmic}).
\end{proof}

\begin{lemma}\label{lemma:rBConstant}
    Let $w_k = \mathtt{b} F^*_{2k}$, where $F^*_{2k} = \rot^*(F_{2k})$.
    Then, $r_B(w_k) = 3$ for any $k \geq 3$.
\end{lemma}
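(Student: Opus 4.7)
The plan is to identify the Lyndon factorization of $w_k$ and then compute $\bbwt(w_k)$ explicitly.

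First, I would establish the Lyndon factorization. Since $F_{2k}$ is primitive, its lexicographically smallest rotation $F^*_{2k}$ is a Lyndon word. Over the binary alphabet $\{\mathtt{a}<\mathtt{b}\}$, every Lyndon word of length at least $2$ must begin with $\mathtt{a}$, so $\mathtt{b} >_\omega F^*_{2k}$. Hence the Lyndon factorization of $w_k = \mathtt{b}\cdot F^*_{2k}$ is the two-factor decomposition $L_1 = \mathtt{b}$, $L_2 = F^*_{2k}$, and no further splitting is possible because $F^*_{2k}$ is itself Lyndon.

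Next, I would compute $\bbwt(w_k)$ by merging, in $\omega$-order, the unique rotation $\mathtt{b}$ of $L_1$ with the $f_{2k}$ rotations of $L_2$, which are exactly the cyclic rotations of $F_{2k}$. The key observation is that $\mathtt{b}$ is $\omega$-maximal among all these: every rotation $r$ of $F_{2k}$ contains the symbol $\mathtt{a}$, so $r^\infty$ has an $\mathtt{a}$ somewhere, whereas $\mathtt{b}^\infty$ does not, giving $r <_\omega \mathtt{b}$. Therefore the single $\mathtt{b}$ is placed last in the sorted order and contributes the final symbol $\mathtt{b}$ to $\bbwt(w_k)$. The preceding $f_{2k}$ symbols are the last symbols of the sorted rotations of $F_{2k}$ (ordinary lexicographic order coincides with $\omega$-order since all rotations have the same length $f_{2k}$), i.e., they form exactly $\bwt(F_{2k})$.

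The final step is to invoke the classical fact that $\bwt(F_{2k}) = \mathtt{b}^{f_{2k-2}}\mathtt{a}^{f_{2k-1}}$, a two-run string whose symbol counts agree with \Cref{observation:NumberOfAorB}. This can be proved by a short induction using the recurrence $F_{i+2}=F_{i+1}F_i$ (or quoted from the literature on Sturmian words). Concatenating with the trailing $\mathtt{b}$ yields
\[
\bbwt(w_k) \;=\; \mathtt{b}^{f_{2k-2}}\,\mathtt{a}^{f_{2k-1}}\,\mathtt{b},
\]
which is exactly three runs, provided $f_{2k-2},f_{2k-1}\geq 1$ (which holds comfortably for $k\geq 3$). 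The only mild obstacle is in pinning down the explicit form of $\bwt(F_{2k})$, in particular getting the parity of the index right so that the $\mathtt{b}$-run precedes the $\mathtt{a}$-run; everything else is essentially bookkeeping after the Lyndon factorization and the $\omega$-maximality of $\mathtt{b}$ are established.
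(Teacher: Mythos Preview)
Your proposal is correct and follows essentially the same route as the paper: identify the Lyndon factorization as $\mathtt{b}\cdot F^*_{2k}$, use the known form $\bwt(F_{2k})=\mathtt{b}^{f_{2k-2}}\mathtt{a}^{f_{2k-1}}$, observe that $\mathtt{b}$ is $\omega$-maximal among the rotations, and conclude $\bbwt(w_k)=\mathtt{b}^{f_{2k-2}}\mathtt{a}^{f_{2k-1}}\mathtt{b}$. Your write-up simply spells out in more detail the justifications that the paper leaves implicit.
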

\begin{proof}
    The Lyndon factorization of $w_k$ results in
    the factors $\mathtt{b}$ and $F^*_{2k}$,
    and it is known that $\bbwt(F^*_{2k}) = \bwt(F^*_{2k}) = \mathtt{b}^{f_{2k-2}}\mathtt{a}^{f_{2k-1}}$.
    Since $\mathtt{b}$ is greater, in $\omega$-order, than any rotation of $F^*_{2k}$, we have
    $\bbwt(w_k) = \mathtt{b}^{f_{2k-2}}\mathtt{a}^{f_{2k-1}}\mathtt{b}$ and $r_B(w_k) = 3$.
\end{proof}
The rest of the proof will focus on showing $r(w_k) = 2k$.
We show two proofs, one that relies heavily on previous results,
and the other an alternate direct proof.

\subsubsection{Proof for $r(w_k) = 2k$}
Actually, it turns out that $w_k$ is a rotation of the strings whose BWT
are shown to have $2k$ runs
in Proposition 3 of~\cite{DBLP:conf/sofsem/GiulianiILPST21} and
in Proposition 3
of~\cite{DBLP:conf/dlt/GiulianiILRSU23}.
The former defines it as $v_{2k}\mathtt{b}$ for $v_{2k} = F_{2k}^R$.
The latter defines it as inserting $\mathtt{b}$ at position $f_{2k-1}-2$ of $F_{2k}$,
and mentions that it is a rotation of the former string.
Our proof of $r(w_k) = 2k$ further connects these strings with the lexicographically smallest rotation
$F^*_{2k}$ of $F_{2k}$.

The following is known:
\begin{theorem}[Theorem~1 in~\cite{christodoulakis_sofsem2006}]
    \label{theorem:fibonacciRank}
    The rotation of $F_n$ with rank $\rho$ in the lexicographically sorted list of all the rotations of $F_n$, for $n\geq 2$, $\rho \in [0..f_n)$ is the rotation $\rot^i(F_n)$ where
    \[
        i = \begin{cases}
            (\rho\cdot f_{n-2}-1)\bmod f_n      & \mbox{ if $n$ odd}   \\
            (-(\rho+1)\cdot f_{n-2}-1)\bmod f_n & \mbox{ if $n$ even}.
        \end{cases}
    \]
\end{theorem}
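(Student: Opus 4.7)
The plan is to reduce the theorem to two ingredients: the classical form of $\bwt(F_n)$ and Cassini's identity for Fibonacci numbers.

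First, I would invoke (or re-derive by a short induction on $n$ using $F_n = F_{n-1}F_{n-2}$) the well-known identity $\bwt(F_n) = \mathtt{b}^{f_{n-2}}\mathtt{a}^{f_{n-1}}$; the symbol counts are handed to us by Observation~\ref{observation:NumberOfAorB}. From this shape the LF mapping $\Psi$ on the BWT can be read off directly: for $k \in [0, f_{n-2})$ one has $\Psi(k) = k + f_{n-1}$, and for $k \in [f_{n-2}, f_n)$ one has $\Psi(k) = k - f_{n-2}$, both of which collapse to $\Psi(\rho) \equiv \rho + f_{n-1} \pmod{f_n}$. Since $\Psi$ sends the lex-rank of $\rot^j(F_n)$ to that of $\rot^{j-1}(F_n)$, iterating backwards yields the linear rank law
\[
    \mathrm{rank}(\rot^j(F_n)) \;\equiv\; \rho^{*} + j \cdot f_{n-2} \pmod{f_n},
\]
where $\rho^{*} := \mathrm{rank}(F_n)$. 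Bijectivity of this correspondence, and hence of the map claimed in the theorem, follows from $\gcd(f_{n-2}, f_n) = 1$, itself a consequence of the coprimality of consecutive Fibonacci numbers.

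To invert the law and express $i$ in terms of $\rho$, one needs $f_{n-2}^{-1} \bmod f_n$. Cassini's identity in the paper's indexing reads $f_n f_{n-2} - f_{n-1}^2 = (-1)^n$, which together with $f_{n-2} \equiv -f_{n-1} \pmod{f_n}$ gives $f_{n-2}^2 \equiv (-1)^{n+1} \pmod{f_n}$. Hence $f_{n-2}^{-1} \equiv (-1)^{n+1} f_{n-2} \pmod{f_n}$ and
\[
    i \;\equiv\; (-1)^{n+1} f_{n-2}\bigl(\rho - \rho^{*}\bigr) \pmod{f_n}.
\]
Matching this with the two cases in the theorem reduces to showing $\rho^{*} \equiv f_{n-2} \pmod{f_n}$ for odd $n$ and $\rho^{*} \equiv f_{n-2} - 1 \pmod{f_n}$ for even $n$; the parity split in the displayed formula is then a transparent reflection of the sign $(-1)^{n+1}$ coming from Cassini's identity, rather than an ad hoc case analysis.

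The main obstacle is this last step: pinning down $\rho^{*}$. The cleanest route I see is to identify the Lyndon rotation $F^*_n$ of $F_n$ (whose rank is $0$) by a short induction on $n$, showing that its offset within $F_n$ equals $f_n - 1$ for odd $n$ and $f_{n-1} - 1$ for even $n$. The induction uses $F_n = F_{n-1}F_{n-2}$ together with the fact that Lyndon words admit no proper non-empty border, so each inductive step needs to compare only a handful of candidate rotations against the inductively known Lyndon rotations of $F_{n-1}$ and $F_{n-2}$. Plugging the resulting offset into the linear rank law above then yields the desired values of $\rho^{*}$ and closes the proof.
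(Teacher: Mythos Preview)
The paper does not prove this statement: it is quoted verbatim from \cite{christodoulakis_sofsem2006} and used only as a black box to extract Corollary~\ref{coro:minimum-rot}. There is therefore no ``paper's own proof'' to compare against.

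That said, your argument is sound. The chain $\bwt(F_n)=\mathtt{b}^{f_{n-2}}\mathtt{a}^{f_{n-1}}\Rightarrow\Psi(\rho)\equiv\rho+f_{n-1}\pmod{f_n}\Rightarrow\mathrm{rank}(\rot^j(F_n))\equiv\rho^{*}+j\,f_{n-2}$ is correct, and the Cassini-based inversion $f_{n-2}^{-1}\equiv(-1)^{n+1}f_{n-2}\pmod{f_n}$ is exactly the right way to explain the parity split in the statement. The reduction to the two congruences $\rho^{*}\equiv f_{n-2}$ (odd $n$) and $\rho^{*}\equiv f_{n-2}-1$ (even $n$) is also correct.

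Your ``main obstacle'' is genuine but smaller than you suggest, and the inductive comparison of candidate rotations is the hardest way to do it. A cleaner anchor: it is classical that for $n\ge 2$ one has $F_n=P_n\,\mathtt{ab}$ for even $n$ and $F_n=P_n\,\mathtt{ba}$ for odd $n$, with $P_n$ a palindrome, and that the Lyndon conjugate is $F_n^{*}=\mathtt{a}\,P_n\,\mathtt{b}$ in both cases. Reading off which rotation this is gives offset $f_n-1$ for odd $n$ (immediately, since $\rot^{-1}(P_n\,\mathtt{ba})=\mathtt{a}\,P_n\,\mathtt{b}$) and $f_{n-1}-1$ for even $n$ (using $\rot^{f_{n-1}}(F_n)=F_{n-2}F_{n-1}=P_n\,\mathtt{ba}$, hence $\rot^{f_{n-1}-1}(F_n)=\mathtt{a}\,P_n\,\mathtt{b}$). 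Plugging these offsets into your rank law with $\rho=0$ yields exactly the two required values of $\rho^{*}$, and the proof closes.
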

Therefore, we have
\begin{corollary} \label{coro:minimum-rot}
    For every integer $k \geq 1$,
    $F^*_{2k} = \rot^{f_{2k-1}-1}(F_{2k})$.
\end{corollary}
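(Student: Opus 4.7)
The plan is to derive the corollary as an almost immediate consequence of \Cref{theorem:fibonacciRank}. Since $F^*_{2k}$ is by definition the lexicographically smallest rotation of $F_{2k}$, it occupies rank $\rho = 0$ in the sorted list of all rotations. I would simply substitute $\rho = 0$ and $n = 2k$ (even) into the formula of \Cref{theorem:fibonacciRank}, obtaining the rotation index
\[
    i = (-(0+1)\cdot f_{2k-2} - 1) \bmod f_{2k} = (-f_{2k-2} - 1) \bmod f_{2k}.
\]

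Next, I would reduce this modulo $f_{2k}$ using the Fibonacci recurrence $f_{2k} = f_{2k-1} + f_{2k-2}$. Rewriting,
\[
    -f_{2k-2} - 1 = (f_{2k-1} - 1) - (f_{2k-1} + f_{2k-2}) = (f_{2k-1} - 1) - f_{2k},
\]
so $(-f_{2k-2} - 1) \bmod f_{2k} = f_{2k-1} - 1$ provided $0 \le f_{2k-1} - 1 < f_{2k}$. This bound holds for every $k \ge 1$ since $f_{2k-1} \ge 1$ and $f_{2k-1} < f_{2k}$. Hence $i = f_{2k-1} - 1$, which is exactly the claim.

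There is no real obstacle here — the whole proof is a one-line modular arithmetic simplification once \Cref{theorem:fibonacciRank} is invoked. The only thing worth being careful about is selecting the correct case of the theorem (even $n$) and confirming that the resulting value already lies in the canonical residue range $[0, f_{2k})$, so that no further reduction is needed.
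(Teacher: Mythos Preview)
Your proposal is correct and is essentially identical to the paper's own proof: the paper also invokes \Cref{theorem:fibonacciRank} with $\rho=0$ in the even case and simplifies $-(f_{2k-2}+1)\equiv f_{2k-1}-1\pmod{f_{2k}}$. Your version is just slightly more explicit about verifying that $f_{2k-1}-1$ lies in the range $[0,f_{2k})$.
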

\begin{proof}
    From~\Cref{theorem:fibonacciRank} with $\rho = 0$, $F^*_{2k} = \rot^i(F_{2k})$ where
    $i = -(f_{2k-2}+1) \equiv f_{2k-1}-1 \pmod {f_{2k}}$.
\end{proof}
which means that inserting a $\mathtt{b}$ at position $f_{2k-1}-2$ of $F_{2k}$ is a rotation of $\mathtt{b}F^*_{2k}$,
leading to:
\begin{corollary}\label{corollary:rLogarithmic}
    Let $w_k = \mathtt{b}\cdot F^*_{2k}$, where $F^*_{2k} = \rot^*(F_{2k})$.
    Then, $r(w_k) = 2k$ for any $k \geq 3$.
\end{corollary}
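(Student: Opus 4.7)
The plan rests on a simple but crucial observation: $r$ is a conjugacy-class invariant. Indeed, the multiset of rotations of a string $u$ coincides with that of any of its rotations $\rot^i(u)$, so sorting them lexicographically produces the same list and hence $\bwt(u) = \bwt(\rot^i(u))$; in particular $r(u) = r(\rot^i(u))$. Consequently, it suffices to exhibit a single rotation of $w_k$ whose BWT is already known to have exactly $2k$ runs.

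To realise this, I would combine \Cref{coro:minimum-rot} with the prior work cited in the subsection. By \Cref{coro:minimum-rot}, $F^*_{2k} = \rot^{f_{2k-1}-1}(F_{2k})$, so
\[
w_k \;=\; \mathtt{b}\cdot F_{2k}[f_{2k-1}-1..f_{2k})\cdot F_{2k}[0..f_{2k-1}-1).
\]
A left rotation of $w_k$ by $f_{2k}-f_{2k-1}+1$ positions moves the leading $\mathtt{b}$ into the interior and yields
\[
F_{2k}[0..f_{2k-1}-1)\cdot\mathtt{b}\cdot F_{2k}[f_{2k-1}-1..f_{2k}),
\]
which, up to indexing convention, is exactly the string obtained from $F_{2k}$ by inserting a single $\mathtt{b}$ at the interior position handled in Proposition~3 of~\cite{DBLP:conf/dlt/GiulianiILRSU23}. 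It is also, after one further rotation, the string $v_{2k}\mathtt{b}$ with $v_{2k}=F_{2k}^R$ studied in Proposition~3 of~\cite{DBLP:conf/sofsem/GiulianiILPST21}.

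Either of those propositions asserts that this insertion string has a BWT with exactly $2k$ runs for $k\geq 3$; combined with the conjugacy invariance above, this gives $r(w_k)=2k$. The main (and in fact only) nontrivial step is therefore identifying $w_k$ up to rotation with the strings already analysed in~\cite{DBLP:conf/sofsem/GiulianiILPST21,DBLP:conf/dlt/GiulianiILRSU23}, and this reduces via \Cref{coro:minimum-rot} to a direct bookkeeping of Fibonacci-length offsets. I do not anticipate needing any further combinatorial input beyond what \Cref{coro:minimum-rot} already supplies; the slight care required with the insertion index versus the exponent $f_{2k-1}-1$ is a routine off-by-one check rather than a real obstacle.
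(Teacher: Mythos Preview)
Your proposal is correct and follows essentially the same route as the paper: use \Cref{coro:minimum-rot} to identify $w_k$ (up to rotation) with the strings analysed in Proposition~3 of~\cite{DBLP:conf/sofsem/GiulianiILPST21} and Proposition~3 of~\cite{DBLP:conf/dlt/GiulianiILRSU23}, then invoke the conjugacy invariance of $r$. The only difference is that you make the conjugacy-invariance step explicit, whereas the paper leaves it implicit; the residual index bookkeeping you flag is indeed routine.
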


\subsubsection{Alternate proof for $r(w_k)=2k$}
We also give an alternate direct proof based on morphisms, which is perhaps slightly simpler compared to the proof for $r(v_{2k}\mathtt{b})=2k$ presented in~\cite{DBLP:conf/sofsem/GiulianiILPST21}
which relies on additional results on {\em special factors} of standard words~\cite{DBLP:journals/ita/BorelR06}.
Results on morphisms and their effects on $r$ have been studied by Fici et al.~\cite{DBLP:conf/cpm/FiciRSU23}, but to the best of our knowledge,
their results do not directly apply to our case.

We use a string morphism $\theta$ defined as:
$\theta(\mathtt{a}) = \mathtt{aab}$, $\theta(\mathtt{b}) = \mathtt{ab}$.
The following claim can be shown by a simple induction.

\begin{claim} [Claim~5 in \cite{mieno_cpm2022}]
    \label{coro:even-morphism}
    For any string $w \in \{\mathtt{a}, \mathtt{b}\}^*$, $\theta(w) = \rot^2(\phi^2(w))$.
\end{claim}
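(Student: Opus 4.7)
My plan is to prove the identity by a direct block factorization of each morphism on single letters, after which the rotation identity falls out of elementary rearrangement. First I would unfold $\phi$ twice on each letter: $\phi^2(\mathtt{a}) = \phi(\mathtt{ab}) = \mathtt{aba}$ and $\phi^2(\mathtt{b}) = \phi(\mathtt{a}) = \mathtt{ab}$. The key observation is that both images begin with $\mathtt{ab}$; so I would write $\phi^2(c) = \mathtt{ab}\cdot\beta_c$ with $\beta_\mathtt{a} = \mathtt{a}$ and $\beta_\mathtt{b} = \varepsilon$. A parallel observation for $\theta$ gives $\theta(c) = \beta_c \cdot \mathtt{ab}$ for both symbols, since $\theta(\mathtt{a}) = \mathtt{a}\cdot\mathtt{ab}$ and $\theta(\mathtt{b}) = \varepsilon \cdot \mathtt{ab}$.

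Writing $w = c_1 \cdots c_n$ and concatenating, these per-letter factorizations yield
\[
\phi^2(w) \;=\; \mathtt{ab}\,\beta_{c_1}\,\mathtt{ab}\,\beta_{c_2}\cdots\mathtt{ab}\,\beta_{c_n}, \qquad \theta(w) \;=\; \beta_{c_1}\,\mathtt{ab}\,\beta_{c_2}\,\mathtt{ab}\cdots\beta_{c_n}\,\mathtt{ab}.
\]
These two strings have the same length, and the second is obtained from the first by deleting the leading $\mathtt{ab}$ (of length $2$) and appending it, which is exactly the action of $\rot^2$. The empty case $w = \varepsilon$ is trivial, so this gives $\theta(w) = \rot^2(\phi^2(w))$ in all cases.

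The only genuinely subtle step is spotting the shared block structure $\phi^2(c) = \mathtt{ab}\cdot\beta_c$ versus $\theta(c) = \beta_c \cdot \mathtt{ab}$; once that is in hand, the rotation identity is a one-line consequence of associativity of concatenation. A formal induction on $|w|$, as the paper's phrasing (\emph{``a simple induction''}) suggests, would carry out the same argument: the base case is immediate, and the inductive step uses precisely that $\phi^2(w)$ begins with $\mathtt{ab}$ whenever $w$ is nonempty, so the leading $\mathtt{ab}$ can be shifted past the newly appended block $\phi^2(c_{n+1})$ to produce the trailing $\mathtt{ab}$ required by $\theta(wc_{n+1})$.
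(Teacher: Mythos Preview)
Your proof is correct: the per-letter factorizations $\phi^2(c)=\mathtt{ab}\cdot\beta_c$ and $\theta(c)=\beta_c\cdot\mathtt{ab}$ are exactly right, and the rotation identity follows at once. The paper does not spell out a proof beyond saying the claim follows by ``a simple induction'' (it is quoted from another source), and your block-factorization argument is precisely the content such an induction would carry, as you yourself note at the end.
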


To prove \Cref{theorem:rB_smallerthan_r}, we first show the following lemma.
\begin{lemma}\label{lem:min-rot_morphism}
    For every integer $k \geq 0$,
    $F^*_{2k} = \theta^k(\mathtt{b})$.
\end{lemma}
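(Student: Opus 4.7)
The plan is to proceed by induction on $k$. The base cases are checked directly: for $k=0$, $\theta^0(\mathtt{b})=\mathtt{b}=F_0=F^*_0$, and for $k=1$, $\theta(\mathtt{b})=\mathtt{ab}=F_2=F^*_2$. These handle the degenerate index situations that \Cref{coro:minimum-rot} does not cover.

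For the inductive step, assume $k\geq 1$ and $\theta^k(\mathtt{b})=F^*_{2k}$; we aim to show $\theta^{k+1}(\mathtt{b})=F^*_{2k+2}$. By \Cref{coro:even-morphism} and the induction hypothesis,
\[
\theta^{k+1}(\mathtt{b})=\rot^2(\phi^2(\theta^k(\mathtt{b})))=\rot^2(\phi^2(F^*_{2k})).
\]
Using \Cref{coro:minimum-rot}, we have $F^*_{2k}=\rot^{f_{2k-1}-1}(F_{2k})$ and $F^*_{2k+2}=\rot^{f_{2k+1}-1}(F_{2k+2})$, so the goal reduces to showing
\[
\rot^2\bigl(\phi^2(\rot^{f_{2k-1}-1}(F_{2k}))\bigr)=\rot^{f_{2k+1}-1}(F_{2k+2}).
\]

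The central tool is the commutation identity $\phi(\rot^j(w))=\rot^{|\phi(w[0..j))|}(\phi(w))$. This is immediate from $\phi$ being a morphism: writing $w=uv$ with $u=w[0..j)$, we get $\phi(\rot^j(w))=\phi(vu)=\phi(v)\phi(u)=\rot^{|\phi(u)|}(\phi(w))$. Iterating once more gives $\phi^2(\rot^j(w))=\rot^{|\phi^2(w[0..j))|}(\phi^2(w))$. Applying this with $w=F_{2k}$ and $j=f_{2k-1}-1$, and using $\phi^2(F_{2k})=F_{2k+2}$, it suffices to show that $|\phi^2(F_{2k}[0..f_{2k-1}-1))|=f_{2k+1}-3$, so that the outer $\rot^2$ raises the total rotation amount to $f_{2k+1}-1$.

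Since $F_{2k}=F_{2k-1}F_{2k-2}$ begins with $F_{2k-1}$ of length $f_{2k-1}$, the prefix $F_{2k}[0..f_{2k-1}-1)$ is exactly $F_{2k-1}$ with its last symbol removed. A short induction on $i$ using $F_i=F_{i-1}F_{i-2}$ shows $F_{2k-1}$ ends in $\mathtt{a}$, so by \Cref{observation:NumberOfAorB} this prefix contains $f_{2k-2}-1$ occurrences of $\mathtt{a}$ and $f_{2k-3}$ of $\mathtt{b}$. With $|\phi^2(\mathtt{a})|=3$ and $|\phi^2(\mathtt{b})|=2$, the Fibonacci recurrence yields $3(f_{2k-2}-1)+2f_{2k-3}=f_{2k}+f_{2k-1}-3=f_{2k+1}-3$, completing the induction. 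The main obstacle is stating the commutation identity for $\phi$ and cyclic rotation cleanly and then carefully bookkeeping the Fibonacci-index arithmetic through the two applications of $\phi$; everything else is mechanical substitution.
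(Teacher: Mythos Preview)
Your proof is correct and follows essentially the same approach as the paper's: induction on $k$, invoking \Cref{coro:even-morphism} to turn $\theta$ into $\rot^2\circ\phi^2$, and \Cref{coro:minimum-rot} to identify the target rotation amount. The only difference is in the bookkeeping of the rotation index: the paper rewrites $\rot^{f_{2k-1}-1}(F_{2k})$ as $\rot^{-1}(F_{2k-2}F_{2k-1})$ so that only a single trailing $\mathtt{a}$ needs to be tracked through $\phi^2$ (contributing a shift of $3$), whereas you state the general commutation identity $\phi^2(\rot^j(w))=\rot^{|\phi^2(w[0..j))|}(\phi^2(w))$ and compute $|\phi^2(F_{2k}[0..f_{2k-1}-1))|=f_{2k+1}-3$ via the $\mathtt{a}/\mathtt{b}$ counts from \Cref{observation:NumberOfAorB}. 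The paper's shortcut is marginally slicker; your formulation is a touch more explicit and reusable, but the two arguments are really the same.
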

\begin{proof}
    It clearly holds for $k=0,1$.
    For $k \geq 2$, we show that $\theta^k(\mathtt{b}) = \rot^{f_{2k-1}-1}(F_{2k})$ by induction on $k$.
    We can see that the statement holds for $k=2$ since $\theta^2(\mathtt{b}) = \mathtt{aabab} = \rot^2(F_4)$.
    Suppose that $\theta^{k'}(\mathtt{b}) = \rot^{f_{2k'-1}-1}(F_{2k'})$ holds for some $k' \geq 2$.
    We have
    \begin{align*}
        \theta^{k'+1}(\mathtt{b})
         & = \theta(\theta^{k'}(\mathtt{b})) = \theta(\rot^{f_{2k'-1}-1}(F_{2k'})) &  & \text{by induction hypothesis}      \\
         & = \theta(\rot^{-1}(F_{2k'-2}F_{2k'-1}))                                                                          \\
         & = \rot^2(\phi^2(\rot^{-1}(F_{2k'-2}F_{2k'-1}))                          &  & \text{by~\Cref{coro:even-morphism}}
\end{align*}
    Since the last symbol of $F_{2k'}$ is $F_{2k'}[f_{2k'-1}-1] = \mathtt{a}$ and $|\phi^2(\mathtt{a})| = 3$,
    we have
    \begin{align*}
        \theta^{k'+1}(\mathtt{b})
         & =\rot^2(\rot^{-3}(F_{2k'}F_{2k'+1}))  \\
         & =\rot^{f_{2k'+1}-1}(F_{2k'+2})        \\
         & = \rot^{f_{2(k'+1)-1}-1}(F_{2(k'+1)})
\end{align*}
    Hence, the statement also holds for $k = k'+1$,
    and $\theta^k(\mathtt{b}) = \rot^{f_{2k-1}-1}(F_{2k})$ for every $k \geq 2$.
    By combining it with Corollary~\ref{coro:minimum-rot},
    we obtain $F^*_{2k} = \theta^k(\mathtt{b})$.
\end{proof}
We are ready to prove the following lemma.
\begin{lemma}\label{lemma:rLogarithmic}
    Let $w_k = \mathtt{b}F^*_{2k}$, where $F^*_{2k} = \rot^*(F_{2k})$.
    Then, $r(w_k) = 2k$ for any $k \geq 2$.
\end{lemma}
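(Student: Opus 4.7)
The plan is to prove the claim by induction on $k \geq 2$. For the base case $k=2$, a direct computation shows $\bwt(w_2) = \mathtt{babbaa}$, which has exactly $4 = 2 \cdot 2$ runs.

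For the inductive step, the key algebraic identity comes from \Cref{lem:min-rot_morphism}:
\[
\theta(w_{k-1}) = \theta(\mathtt{b})\,\theta(F^*_{2k-2}) = \mathtt{ab}\cdot F^*_{2k} = \mathtt{a} w_k,
\]
so $w_k$ is obtained by applying the morphism $\theta$ to $w_{k-1}$ and stripping the leading $\mathtt{a}$. Using the factorization $F^*_{2k} = \theta(F^*_{2k-2})$, every position of $F^*_{2k}$ belongs to a unique $\theta$-image block (either $\mathtt{aab}$ coming from an $\mathtt{a}$ of $F^*_{2k-2}$ or $\mathtt{ab}$ coming from a $\mathtt{b}$) and has a well-defined \emph{phase} within its block. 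Since both $\mathtt{aab}$ and $\mathtt{ab}$ end in $\mathtt{b}$, the cyclic preceding character in $w_k$ of every phase-$0$ position (the first position of a block) as well as of the inserted leading $\mathtt{b}$ is $\mathtt{b}$, while the preceding character of every other position is $\mathtt{a}$. In particular, the number of $\mathtt{b}$'s in $\bwt(w_k)$ equals the number of blocks in the factorization plus one, namely $f_{2k-2}+1$, a useful invariant.

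To count runs, I would analyze how the phase-$0$ starting positions are arranged in the sorted rotation order. My plan is to prove the refined recursive identity
\[
\bwt(w_k) \;=\; \mathtt{b}^{f_{2k-3}} \cdot \overline{\bwt(w_{k-1})} \cdot \mathtt{a}^{f_{2k-2}} \qquad (k \geq 3),
\]
where $\overline{x}$ denotes the string obtained from $x$ by swapping $\mathtt{a}$ and $\mathtt{b}$. Since (by induction) $\bwt(w_{k-1})$ begins with $\mathtt{b}$ and ends with $\mathtt{a}$, the complementary $\overline{\bwt(w_{k-1})}$ begins with $\mathtt{a}$ and ends with $\mathtt{b}$, so the two concatenation boundaries introduce no run merges and $r(w_k) = r(w_{k-1}) + 2 = 2k$. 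To derive the identity itself, I would split the sorted rotations of $w_k$ into those beginning with $\mathtt{b}$ and those beginning with $\mathtt{a}$. For the $\mathtt{b}$-rotations, one shows that $w_k$ is the lexicographically smallest $\mathtt{b}$-starting rotation (its last character is $\mathtt{b}$ because $F^*_{2k} = \theta^k(\mathtt{b})$ ends in $\mathtt{b}$) and that every other $\mathtt{b}$-starting rotation has preceding character $\mathtt{a}$ since the Sturmian word $F^*_{2k}$ has no $\mathtt{bb}$ factor; this yields the trailing $\mathtt{b}\cdot\mathtt{a}^{f_{2k-2}}$. For the $\mathtt{a}$-rotations, one uses the $\theta$-phase correspondence to match them, in order, with rotations of $w_{k-1}$.

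The hardest part will be establishing the $\mathtt{a}$-block portion: verifying that the $\theta$-phase correspondence respects the lexicographic order of rotations and simultaneously produces both the alphabet swap in $\overline{\bwt(w_{k-1})}$ and the initial $\mathtt{b}^{f_{2k-3}}$ block. This requires comparing rotations of $w_k$ through their $\theta$-preimages in $w_{k-1}$ and handling the boundary caused by the leading $\mathtt{b}$ of $w_k$, which is not the image of any symbol under $\theta$. The main tool should be the strong self-similar structure of the Sturmian word $F^*_{2k}$: any comparison between two $\mathtt{a}$-starting rotations that do not immediately wrap past the leading $\mathtt{b}$ can be reduced, using the block boundaries of $\theta$, to a comparison between rotations of $w_{k-1}$. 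Starting the induction at $k=2$ rather than $k=1$ is necessary because $|w_1| = 3$ is too small for the Sturmian-based arguments; the base case handles this directly.
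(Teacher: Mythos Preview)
Your plan hinges on the recursive identity
\[
\bwt(w_k) \;=\; \mathtt{b}^{f_{2k-3}} \cdot \overline{\bwt(w_{k-1})} \cdot \mathtt{a}^{f_{2k-2}},
\]
but this identity is false. A letter count already rules it out: the right-hand side contains $f_{2k-3}+|\overline{\bwt(w_{k-1})}|_{\mathtt b}=f_{2k-3}+|w_{k-1}|_{\mathtt a}=2f_{2k-3}$ copies of $\mathtt b$, whereas $|w_k|_{\mathtt b}=1+f_{2k-2}$, and $2f_{2k-3}-(1+f_{2k-2})=f_{2k-5}-1$ vanishes only for $k\le 3$. Concretely, $\bwt(w_4)=\mathtt{b}^{10}\mathtt{a}\mathtt{b}\mathtt{a}^{2}\mathtt{b}\mathtt{a}^{5}\mathtt{b}^{2}\mathtt{a}^{13}$, while your formula yields $\mathtt{b}^{8}\mathtt{a}^{3}\mathtt{b}\mathtt{a}\mathtt{b}^{2}\mathtt{a}^{2}\mathtt{b}^{5}\mathtt{a}^{13}$; the agreement at $k=3$ is an accident. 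More conceptually, there is no mechanism producing an alphabet swap: $\theta$ is order-preserving, and the cyclic predecessor of a $\theta$-block boundary in $w_k$ equals the predecessor of the corresponding position in $w_{k-1}$, not its complement. So the part you identify as ``hardest'' is not merely hard but unprovable as stated.

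The paper sidesteps the boundary difficulty you flag by first replacing the leading $\mathtt b$ with a fresh third letter, setting $y_k=\mathtt{c}F^{*}_{2k}$, and proving an explicit closed form for $\bwt(y_k)$ by induction. The decomposition is by the first \emph{two} letters of a rotation ($\mathtt{aa}$, $\mathtt{ab}$, or $\mathtt{b}/\mathtt{c}$), and the recursion is carried by the $\mathtt{ab}$-rotations, which via $\theta$ correspond order-preservingly to the rotations of $y_{k-1}$ with the \emph{same} preceding letters. Only afterwards is $\mathtt{c}$ specialized back to $\mathtt{b}$; this causes two local, trackable changes in the BWT and drops the run count from $2k+2$ to $2k$.
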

\begin{proof}
    We first prove that $r(y_k) = 2k+2$ for $y_k = \mathtt{c}F^*_{2k}$
    for any $k \geq 3$.
    From Lemma~\ref{lem:min-rot_morphism}, we have $y_k = \mathtt{c}F^*_{2k} = \mathtt{c}\theta^k(\mathtt{b})$.
    Below, we extend the definition of $\theta$ so that $\theta(\mathtt{c})=\mathtt{c}$,
    so $y_k = \theta^k(\mathtt{cb})$.
    We claim that
    \begin{align}
        \bwt(y_k) & = \mathtt{c}\mathtt{b}^{f_{2k -2} - k} \prod_{j=1}^{k} (\mathtt{a}^{f_{2j-2}}\mathtt{b})\label{eqn:bwtyk}
    \end{align}
    We show this statement by induction on $k$.
    If $k=3$, $y_3 = \mathtt{caabaababaabab}$ and $\bwt(y_3) = \mathtt{cb}^2\mathtt{aba}^2\mathtt{ba}^5\mathtt{b}$ hold.
    Assume that the statement holds for some $k' \geq 3$.
    We show the statement also holds for $k=k'+1$.
    We consider the three disjoint sets of the rotations of $y_{k'+1}$ as follows: the rotations (i) starting with $\mathtt{a}^2$, (ii) starting with $\mathtt{ab}$, (iii) starting with $\mathtt{b}$ or $\mathtt{c}$.
    \begin{enumerate}[(i)]
        \item
              From the definition of $\theta$,
              the number of rotations starting with $\mathtt{a}^2$ of $y_{k'+1}$ is $f_{2k'-1}~(= |F_{2k'}|_\mathtt{a})$.
              These rotations are the first $f_{2k'-1}$ rotations in the lexicographically increasingly sorted list of all rotations of $y_{k'+1}$.
              By the definition of $y_{k}$, the lexicographically smallest rotation is $\rot(y_{k'})$,
              and the preceding symbol of this rotation is $\mathtt{c}$.
              The other rotations of this case are preceded by $\mathtt{b}$.
              Hence, $\bwt(y_{k'+1})[0..f_{2k'-1}-1] = \mathtt{cb}^{f_{2k'-1}-1}$.

        \item Let $\pos$ be the set of positions in $y_{k'+1}$ that are preceded
              by an occurrence of $\mathtt{ab}$.
              From the definitions of $y_{k}$ and $\theta$,
              every occurrence of $\mathtt{ab}$ in $y_{k'+1}$
              is a suffix of an occurrence of the substrings ($\mathtt{aab}$ or $\mathtt{ab}$) produced from $\theta$ and an occurrence of $\mathtt{a}$ or~$\mathtt{b}$ in~$y_{k'}$.
Therefore, we have $|\pos| = |y_{k'}|-1 = f_{2k'}$ and
              if $\source(i)$ denotes the position in $y_{k'}$ that produces the corresponding symbol at position $i$ in $y_{k'+1}$ by the morphism $\theta$, then
$\rot^{i}(y_{k'+1}) = \theta(\rot^{\source(i)}(y_{k'}))$ for any $i\in\pos$.
              Thus, for any positions $i_1, i_2 \in \pos$,
              \begin{align}
                  \rot^{i_1-2}(y_{k'+1}) < \rot^{i_2-2}(y_{k'+1})
                   & \iff \rot^{i_1}(y_{k'+1}) < \rot^{i_2}(y_{k'+1})\nonumber                                \\
                   & \iff \theta(\rot^{\source(i_1)}(y_{k'})) < \theta(\rot^{\source(i_2)}(y_{k'}))\nonumber  \\
                   & \iff \rot^{\source(i_1)}(y_{k'}) < \rot^{\source(i_2)}(y_{k'}).\label{eqn:equaltosource}
              \end{align}
              where the last relation follows from the fact that $\theta$ is an order preserving morphism (i.e., $s < t \Leftrightarrow \theta(s) < \theta(t)$ holds for any $s,t$).
We can also see that for any $i\in\pos$,
              the symbol $y_{k'+1}[i-3]$ preceding the occurrence of $\mathtt{ab}$ is
              $\mathtt{a}$ (resp. $\mathtt{b}$) iff $y_{k'}[\source(i)-1]$ is $\mathtt{a}$ (resp. $\mathtt{b}$).
              Thus, together with \Cref{eqn:equaltosource},
              the sequence of entries in $\bwt(y_{k'+1})$ corresponding to rotations that start with $\mathtt{ab}$ are equivalent to
              the sequence of entries in $\bwt(y_{k'})$ for rotations that start with $\mathtt{a}$ or $\mathtt{b}$,
              i.e.,
              $\bwt(y_{k'+1})[f_{2k'-1}..f_{2k'+1}-1] = \bwt(y_{k'})[1..f_{2k'}] = \mathtt{b}^{f_{2k'-2} - k'} \prod_{j=1}^{k'} (\mathtt{a}^{f_{2j-2}}\mathtt{b})$.
              See also~\Cref{fig:large-bwt}.
        \item
              From the definition of $\theta$,
              the number of rotations starting with $\mathtt{b}$ in $y_{k'+1}$ is $f_{2k'}$,
              and the lexicographically largest rotation is $y_{k'+1}$ itself.
              These rotations are the last $f_{2k'}+1$ rotations in the lexicographically increasingly sorted list of all rotations of $y_{k'+1}$.
              The preceding symbol of the largest rotation, or equivalently, the last symbol of $y_k$ is $\mathtt{b}$.
              All rotations starting with $\mathtt{b}$ are preceded by $\mathtt{a}$.
              Hence, $\bwt(y_{k'+1})[f_{2k'+1}..f_{2k'+2}] = \mathtt{a}^{f_{2k'}}\mathtt{b}$.
    \end{enumerate}
    All together, we have
    \begin{align*}
        \bwt(y_{k'+1})
         & = \overbrace{\mathtt{cb}^{f_{2k'-1}-1}\vphantom{\prod_{j}^{k'}]}}^{\mathsf{(i)}}
        \cdot
        \overbrace{\mathtt{b}^{f_{2k'-2}-k'} \prod_{j=1}^{k'} (\mathtt{a}^{f_{2j-2}}\mathtt{b})}^{\mathsf{(ii)}}
        \cdot       \overbrace{\mathtt{a}^{f_{2k'}}\mathtt{b}\vphantom{\prod_{j}^{k'}]}}^{\mathsf{(iii)}} \\
         & = \mathtt{c}\mathtt{b}^{f_{2k'} - (k'+1)} \prod_{j=1}^{k'+1} (\mathtt{a}^{f_{2j-2}}\mathtt{b})
    \end{align*}
    and the statement holds for $k=k'+1$, proving \Cref{eqn:bwtyk}.
    Therefore, $r(y_k) = 2k+2$.
    We note that \Cref{eqn:bwtyk} holds for $k=2$, but since $f_{2k-2}-2 = 0$,
    $r(y_2) = 5$.

    The last symbol of $F^*_{2k}$ is $\mathtt{b}$, so changing the $\mathtt{c}$ in $y_k$ to $\mathtt{b}$
    changes the unique occurrence of $\mathtt{bc}$ in all rotations of $y_k$ except for $y_k$ itself,
    to the unique occurrence of $\mathtt{bb}$ in all rotations of $w_k$ except for $w_k$ itself.
    Therefore, the lexicographic order of rotations of $w_k$ are equivalent to those in $y_k$,
    except for $w_k$ itself.
    We can see that $\bwt(w_k)$ will have the following changes compared to $\bwt(y_k)$:
    1) the lexicographically smallest rotation $\rot(w_k)$ is preceded by $\mathtt{b}$ instead of $\mathtt{c}$,
    and
    2) $w_k$ is the lexicographically smallest rotation that starts with $\mathtt{b}$ (rather than $\mathtt{c}$).
1) changes $\mathtt{c}$ to $\mathtt{b}$ and thus decreases the run-length by $1$.
    2) moves the last $\mathtt{b}$ right after the second to last $\mathtt{b}$ in $\bwt(y_k)$
    (the preceding symbol of the largest rotation starting with $\mathtt{a}$) and thus decreases the run-length by $1$.
    Therefore, in total, $r(w_k) = 2k$ which can be confirmed to hold for $k=2$ as well.
\end{proof}

\begin{figure}[t]
    \begin{center}
        \includegraphics[width=0.8\textwidth]{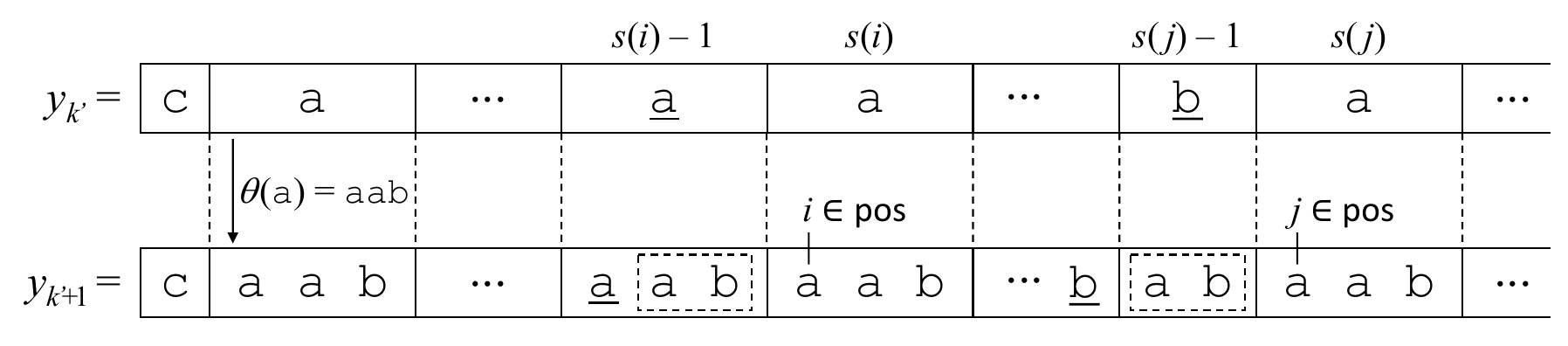}
        \caption{
            Illustration of the proof of Lemma~\ref{lemma:rLogarithmic}.
            Every position in the set $\pos$ is preceded by an occurrence of $\mathtt{ab}$.
            We can obtain the lexicographic rank of the rotation $\rot^{i-2}(y_{k'+1})$ that starts with $\mathtt{ab}$ by using the corresponding rotation $\rot^{s(i)}(y_{k'})$.
            The preceded symbols (underlined symbols) of $y_{k'+1}[i-3]$ and $y_{k'}[\source(i)-1]$ are the always same.
            The figure shows the case of $y_{k'+1}[j-3] = y_{k'}[\source(j)-1] = \mathtt{a}$ by the position $i$ and the case of $y_{k'+1}[j-3] = y_{k'}[\source(j)-1] = \mathtt{b}$ by the position $j$.
        }
        \label{fig:large-bwt}
    \end{center}
\end{figure}

\subsection{Bounds for $r(w)/r_B(w)$}
We show poly-logarithmic upper and lower bounds on the ratio $r(w)/r_B(w)$ for any string.

We first show a simple upper bound on the multiplicative rotation sensitivity of $z$.
\begin{lemma}\label{lemma:sensitivity:zrotation}
    For any strings $w,w'$ of length $n$
    in the same conjugacy class,
    $z(w') = O(z(w)\log n)$.
\end{lemma}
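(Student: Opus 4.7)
The plan is to funnel the bound through the smallest bidirectional macro scheme. Specifically, I will establish rotation-stability for $b$: namely, $b(w') = O(b(w))$ whenever $w'$ is a cyclic rotation of $w$. Once this is in hand, combining it with the cited bound $z = O(b\log n)$ from~\cite{DBLP:journals/tit/NavarroOP21} and the trivial inequality $b \leq z$ (LZ77 is itself a bidirectional macro scheme) immediately yields
\[
z(w') = O(b(w')\log n) = O(b(w)\log n) = O(z(w)\log n),
\]
as required. Note that the alternative of using $r(w') = r(w)$ (BWT is invariant under cyclic rotation of the input) together with $z = O(r\log n)$ and $r = O(z\log^2 n)$ would only give an $O(z\log^3 n)$ bound, which is why routing through $b$ is preferable.

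To show $b(w') = O(b(w))$, write $w' = \rot^k(w)$ for some $k$, take any BMS $B$ of $w$ with $b(w)$ phrases, and transform it into a BMS of $w$ in which no phrase and no phrase's source straddles the cut position $k$. Such a BMS can then be re-indexed by the bijection $i \mapsto (i-k)\bmod n$ to become a valid BMS of $w'$. The transformation proceeds in two stages. First, for every phrase with target $[t, t+|p|)$ and source $[s, s+|p|)$ such that $s < k < s+|p|$, I split the phrase at the internal offset $k-s$; the two resulting phrases inherit the two halves of the source, each lying entirely on one side of $k$. Second, at most one remaining phrase can still have $k$ strictly inside its target interval (originally only one did, and any stage-1 split of that phrase leaves exactly one of the two pieces with $k$ strictly interior), and I split that phrase at $k$; by the first stage its source no longer crosses $k$, so the corresponding source-split stays within one side. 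In total this produces a BMS with at most $2b(w)+1$ phrases.

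The key property that makes the transformation valid is that splitting a phrase at a common relative offset of its target and source leaves every position-to-position reference unchanged, so the forest witnessing acyclicity of $B$ remains a valid witness after all splits. Re-indexing by a bijection on $[0,n)$ then relabels the forest but preserves acyclicity, and the character at each position is preserved as well, so the output is a genuine BMS of $w'$. The main thing to verify carefully is thus the preservation of acyclicity through the two splitting stages and the rotation; this is a bookkeeping check rather than a deep obstacle, and once it is established the chain of inequalities above closes the argument.
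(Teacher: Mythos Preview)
Your proposal is correct and follows essentially the same route as the paper: both arguments establish $b(w')\le 2b(w)+1$ by splitting at the rotation point (you split sources first and then the single straddling target, the paper does target first and then sources) and then combine $b\le z$ with $z=O(b\log n)$ to conclude. Your explicit remark that position-level references are unchanged under splitting, and hence acyclicity is preserved through the rotation bijection, supplies a detail the paper leaves implicit.
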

\begin{proof}
    It is easy to see that $b(w') = \Theta(b(w))$.
    This is because, given any BMS of $w$ of size $b$,
    we can reuse the parsing and referencing,
    except for the following changes,
    to construct a BMS for $w'$ of size $2b+1$:
    (1) if $w'$ starts in the middle of a BMS phrase of $w$, we split the phrase,
    and (2) if a phrase references a substring of $w$ but is split in $w'$, we split the phrase.
    Since only one of the phrases of the split phrase in (1) is split by (2),
    the total number of phrases is at most $2b+1$.

    Since
    $b(w) \leq z(w) = O(b(w)\log n)$ and
    $b(w') \leq z(w') = O(b(w')\log n)$, we have $z(w')/z(w) = O(\log n)$.
\end{proof}

\begin{lemma}\label{lemma:sensitivity:rBrotation}
    For any strings $w,w'$ of length $n$
    in the same conjugacy class,
    $r_B(w') = O(r_B(w)\log^4 n)$.
\end{lemma}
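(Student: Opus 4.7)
The plan is to chain together four polylogarithmic inequalities.  Starting from $r_B(w')$, I would use: (i) $r_B(w') = O(z(w') \log^2 n)$ from Badkobeh et al.; (ii) $z(w') = O(z(w) \log n)$ from Lemma~\ref{lemma:sensitivity:zrotation}; (iii) $z(w) = O(b(w) \log n)$, the standard grammar-compression bound of Navarro et al.; and (iv) $b(w) = O(r_B(w))$, a BBWT analog of Navarro et al.'s $b = O(r)$ construction from RLBWT.  Multiplying (i)--(iv) yields $r_B(w') = O(r_B(w) \log^4 n)$ as required.

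Step (iv) is the only one that does not follow immediately from a cited result and is the crux of the argument.  Navarro et al.\ construct a BMS of size $2r$ directly from the RLBWT by exploiting that the LF-mapping $\Psi_{\bwt(w)}$ is piecewise linear with $O(r)$ pieces, one per BWT run.  The LF-mapping $\Psi_{\bbwt(w)}$ decomposes into multiple cycles (one per distinct Lyndon factor of $w$) rather than a single cycle, but the piecewise-linear structure with respect to runs of $\bbwt(w)$ should be preserved within each cycle.  The plan is to apply a cycle-by-cycle version of the Navarro et al.\ construction: each cycle yields a BMS for its corresponding (contiguous) Lyndon-factor substring of $w$ of size proportional to the runs attributable to that cycle, and concatenating these per-factor BMSs preserves acyclicity because every reference stays inside its own Lyndon factor.

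The main obstacle will be cleanly apportioning the runs of $\bbwt(w)$ to the cycles of $\Psi_{\bbwt(w)}$.  Because rotations of different Lyndon factors can interleave in the $\omega$-order used to define $\bbwt(w)$, a single $\bbwt(w)$-run may be split across cycles, contributing extra phrases per Lyndon-factor boundary; I expect this overhead to be absorbed into the bound since each cycle produces at least one run locally and the number of cycles is therefore $O(r_B(w))$.  If the per-cycle construction turns out to be more delicate than anticipated, a fallback is to establish $b = O(r_B)$ indirectly, e.g.\ via a string attractor of size $O(r_B)$ induced by the runs of $\bbwt(w)$, and then invoke $b = O(\gamma \log n)$, which would cost an extra $\log n$ but still keep the overall bound polylogarithmic.
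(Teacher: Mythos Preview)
Your chain (i)--(iv) is exactly the paper's argument: the paper likewise combines $r_B(w') = O(z(w')\log^2 n)$, the rotation sensitivity $z(w') = O(z(w)\log n)$ from \Cref{lemma:sensitivity:zrotation}, the bound $z(w) = O(b(w)\log n)$, and $b(w) = O(r_B(w))$, and multiplies.

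Where you diverge is in treating (iv) as ``the crux'' to be established.  In fact $b = O(r_B)$ is already one of the results of Badkobeh et al.~\cite{DBLP:conf/spire/BadkobehBK24}, and the paper simply cites it alongside (i).  So your per-cycle reconstruction is unnecessary for this lemma.

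For what it is worth, your sketch of (iv) has a genuine gap.  You argue that the number of cycles of $\Psi_{\bbwt(w)}$ is $O(r_B(w))$ because ``each cycle produces at least one run locally'', but distinct cycles can sit inside the same run.  Take $w = \mathtt{b}\mathtt{a}^m$: its Lyndon factorization is $\mathtt{b},\mathtt{a},\ldots,\mathtt{a}$, so $\Psi_{\bbwt(w)}$ has $m+1$ singleton cycles, yet $\bbwt(w) = \mathtt{a}^m\mathtt{b}$ has only $r_B(w)=2$ runs.  A cycle-by-cycle BMS would then output $\Theta(m)$ phrases.  The way out is to work run-by-run rather than cycle-by-cycle: $\Psi_{\bbwt(w)}$ is piecewise affine on each run of $\bbwt(w)$ regardless of how that run is partitioned among cycles, and the BMS phrases and sources should be read off from the runs, not the cycles.
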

\begin{proof}
    Badkobeh et al.~\cite{DBLP:conf/spire/BadkobehBK24} showed (1) $b(w) = O(r_B(w))$, which implies $z(w)=O(r_B(w)\log n)$,
    and (2) $r_B(w) = O(z(w)\log^2 n)$.
    Therefore, $r_B(w')/r_B(w) = O(z(w')/z(w)\log^3 n) = O(\log^4 n)$
    from~\Cref{lemma:sensitivity:zrotation}.
\end{proof}
\begin{corollary}\label{corollary:RvsRB}
    For any string $w$ of length $n$, $r_B(w) = O(r(w) \log^4 n)$ and $r(w) = O(r_B(w)\log^4 n)$.
\end{corollary}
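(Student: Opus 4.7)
The plan is to reduce the corollary directly to \Cref{lemma:sensitivity:rBrotation} by exploiting the fact, already noted in the preliminaries, that the BWT can be viewed as a special case of the BBWT: namely $\bwt(w) = \bbwt(\rot^*(w))$, because $\rot^*(w)$ is always of the form $L^e$ for some Lyndon word $L$ and integer $e \geq 1$. This identity instantly yields $r(w) = r_B(\rot^*(w))$, connecting $r$ to $r_B$ via a rotation of $w$.

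Next, I would observe that $w$ and $\rot^*(w)$ lie in the same conjugacy class and have the same length $n$, so \Cref{lemma:sensitivity:rBrotation} applies in both directions:
\[
r_B(\rot^*(w)) = O\!\bigl(r_B(w)\log^4 n\bigr)
\qquad\text{and}\qquad
r_B(w) = O\!\bigl(r_B(\rot^*(w))\log^4 n\bigr).
\]
Substituting $r(w) = r_B(\rot^*(w))$ into the first bound gives $r(w) = O(r_B(w)\log^4 n)$, and into the second gives $r_B(w) = O(r(w)\log^4 n)$. This is exactly the content of the corollary.

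There is essentially no hard step here: once \Cref{lemma:sensitivity:rBrotation} is in hand, the corollary is a one-line consequence of the identity $\bwt(w) = \bbwt(\rot^*(w))$. The only mild subtlety is to make sure that this identity is valid for \emph{all} strings $w$, not only primitive ones; this is guaranteed because the lexicographically smallest rotation of any string is a power of a Lyndon word, so its Lyndon factorization is a sequence of identical Lyndon factors, and the $\omega$-order on their rotations coincides with the lexicographic order used in defining the BWT.
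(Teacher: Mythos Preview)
Your argument is correct and is exactly the intended derivation: the paper states the corollary immediately after \Cref{lemma:sensitivity:rBrotation} without proof, relying on the identity $r(w)=r_B(\rot^*(w))$ from the preliminaries together with the rotation sensitivity bound, just as you do.
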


\section{Clustering effects of $\bwt$ and $\bbwt$}
\subsection{In terms of $\rho$}
The clustering effect of $\bwt$ was measured by the length $\rho$ of the run-length encoding.
Namely, the following statement was claimed by Mantaci et al.~\cite{DBLP:journals/tcs/MantaciRRSV17}.
\begin{theorem}[Theorem 3.3 in~\cite{DBLP:journals/tcs/MantaciRRSV17}]\label{thm:bwtRLClustering}
    For any string $w$, $\rho(\bwt(w)) \leq 2\rho(w)$.
\end{theorem}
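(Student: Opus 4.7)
The plan is to count the number of positions $i$ at which $\bwt(w)[i] \neq \bwt(w)[i+1]$, since $\rho(\bwt(w))$ equals one plus this count. Let $n = |w|$, let $p_i$ denote the starting position of the $i$-th lexicographically sorted rotation, so that $\bwt(w)[i] = w[(p_i - 1) \bmod n]$, and let $F[i] = w[p_i]$ be the first column of the BWT matrix.

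My first step is to classify each transition into two cases based on the relation between the two adjacent rows. In \emph{Case A}, $F[i] = F[i+1] = c$ but $\bwt(w)[i] \neq \bwt(w)[i+1]$; since the two preceding symbols differ, at least one of them is distinct from $c$, which means at least one of $p_i, p_{i+1}$ is preceded (cyclically in $w$) by a symbol other than $c$, i.e., it is the start of a run when $w$ is viewed as a cyclic string. In \emph{Case B}, $F[i] \neq F[i+1]$, which occurs only at one of the at most $|\Sigma_w| - 1$ character-block boundaries of the BWT matrix, where $\Sigma_w$ is the alphabet appearing in $w$.

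Next I would charge each Case A transition to an associated cyclic run-start in $\{p_i, p_{i+1}\}$. Because any run-start $p_k$ is adjacent in the sorted order only to $p_{k-1}$ and $p_{k+1}$, it receives at most two charges, so Case A contributes at most $2\rho(w)$ transitions. Combined with the bound $|\Sigma_w| - 1 \leq \rho(w) - 1$ for Case B, the naive combination already yields $\rho(\bwt(w)) = O(\rho(w))$ with a constant of roughly $3$.

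The main obstacle is to tighten that constant from $3$ down to $2$, and I expect this to be the most delicate part of the proof. The refinement I have in mind is to do the counting block by block: restricting $\bwt(w)$ to each character-block $B_c$ of the BWT matrix, the number of runs contributed is at most $2\rho_c(w) + 1$, with the extra $+1$ arising only when both the first and last entries of the restriction equal $c$; and I expect such configurations to force an identical-symbol match across an adjacent character-block boundary (a merge that reduces the total run count), so that each Case B overhead is absorbed by a corresponding drop from a neighboring block. Formalizing that these per-block excesses and inter-block merges always balance is the combinatorial core that would push the overall bound down to $2\rho(w)$.
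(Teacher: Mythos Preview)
Your per-block decomposition and identification of the possible $+1$ excess (arising exactly when the block $B_c$ both starts and ends with $c$) match the paper's setup; indeed, the paper notes that Mantaci et al.'s original argument asserted $\rho(x[i_c..j_c])\le 2\rho_c(w)$ without ruling out this very $+1$. Your proposed fix, however, is where the plan breaks: you hope that $x[i_c]=x[j_c]=c$ forces a run-merge at an adjacent block boundary, but there is no such forcing. A merge at the upper boundary would require $x[j_c+1]=c$, i.e., that the lexicographically smallest rotation beginning with the next symbol $c'>c$ happens to be preceded by $c$; this is governed by the positions of $c'$ in $w$ and bears no direct relation to the endpoints of $B_c$. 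The ``balancing'' you sketch cannot be carried out along these lines.

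The paper resolves the issue by a different route: it shows that for primitive $w$ the configuration $x[i_c]=x[j_c]=c$ is \emph{impossible}, so the $+1$ never materialises and each block already satisfies $\rho(x[i_c..j_c])\le 2\rho_c(w)$. The argument uses the LF mapping $\Psi_x$. Tracing any maximal cyclic $c$-run $w[i'..j']$ through $\Psi_x$, one enters $[i_c,j_c]$ from a rank outside it and then moves \emph{monotonically} (always increasing or always decreasing) through $[i_c,j_c]$ until landing on one of the $\rho_c(w)$ non-$c$ positions of the block. Because $\Psi_x$ is order-preserving on the $c$-positions of the block, a decreasing trajectory cannot lie below an increasing one; combined with the hypothesis that both extreme positions $i_c,j_c$ carry the symbol $c$, this forces some $k'\in[i_c,j_c]$ with $\Psi_x(k')=k'$, contradicting the single-cycle structure of $\Psi_x$ for primitive $w$ of length $>1$. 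Non-primitive strings reduce to their primitive root. Summing the per-block bounds then yields $\rho(\bwt(w))\le 2\rho(w)$ directly, with no inter-block cancellation needed.
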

While the statement is true, we believe there is
a non-trivial case that was not covered in their
original proof.
Here, we address this case and also show that the same
statement holds for $\bbwt$.
Note that the following \Cref{thm:bbwtRLClustering} implies~\Cref{thm:bwtRLClustering} because $\bwt(w) = \bbwt(\rot^*(w))$ and $\rho(\rot^*(w)) \leq \rho(w)$.
\begin{theorem}\label{thm:bbwtRLClustering}
    For any string $w$, $\rho(\bbwt(w)) \leq 2\rho(w)$.
\end{theorem}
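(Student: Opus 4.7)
The plan is to adapt Mantaci et al.'s charging argument to the $\bbwt$ setting while simultaneously addressing the gap the authors identified in the original proof for $\bwt$. Since $\bwt(w) = \bbwt(\rot^*(w))$, the $\bbwt$ bound automatically implies the $\bwt$ bound, so a single unified argument suffices.

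My setup would be as follows. Write the rotations of all Lyndon factors of $w$, in $\omega$-order, as $SA[0], \ldots, SA[n-1]$, let $B[i]=\bbwt(w)[i]$ be the preceding symbol of $SA[i]$, and let $A[i]$ be the first symbol of $SA[i]$. I classify each run-boundary position $i \in [1,n{-}1]$ (where $B[i]\ne B[i-1]$) into two types: a \emph{block boundary}, where $A[i-1]\ne A[i]$, and an \emph{internal boundary}, where $A[i-1]=A[i]=c$ but $B[i-1]\ne B[i]$. Block boundaries contribute at most $\sigma - 1$ to $\rho(\bbwt(w))$, and since each symbol of $w$ appears in at least one run, $\sigma \le \rho(w)$.

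For internal boundaries I would use the order-preserving property of the LF mapping $\Psi$ restricted to positions with fixed preceding symbol. At an internal boundary within the $c$-block, the rotations $SA[i-1]$ and $SA[i]$ are $\omega$-adjacent both starting with $c$, but their preceding symbols differ. Applying $\Psi$ once (which sends position $j$ to a rotation starting with $B[j]$) throws $\Psi(i-1)$ and $\Psi(i)$ into two different $\omega$-blocks, so this internal boundary witnesses a cyclic bigram change within some Lyndon factor of $w$: namely, the two bigrams $B[i-1]c$ and $B[i]c$ both occur cyclically in Lyndon factors of $w$, and they use different left symbols. The plan is then to inject these witnesses into the set of cyclic run-boundaries of the Lyndon factors of $w$, whose size is bounded by $\rho(w)$ thanks to the fact that each Lyndon factor of length $\ge 2$ begins with its smallest symbol and therefore the factor boundaries themselves are already run-boundaries of $w$.

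The hard part will be making the charging injective. The subtle case (likely the one missed in Mantaci et al.) occurs when the two bigrams $B[i-1]c$ and $B[i]c$ traced from an internal boundary happen to coincide, as bigrams of $w$, with those traced from a different internal boundary in another $c'$-block, or when one of $B[i-1], B[i]$ equals $c$ so that the ``transition in $w$'' being charged is actually the boundary of the $c$-run that produced the very positions we are examining. Handling this requires tracking, for each witness, which Lyndon factor instance and which occurrence of the bigram is used; the injection is then obtained from the fact that rotations of distinct primitive Lyndon words (and of distinct instances of the same Lyndon word, ordered by $\omega$) are distinct and align naturally with the cycle structure of $\Psi$. If this direct charging proves too fragile, the fallback is an induction on $\ell(w)$: peel off $L_\ell^{e_\ell}$ and use the identity $\bbwt(L^e)$ inherits the run structure of $\bwt(L^e)$, together with a lemma bounding the number of additional runs introduced when interleaving a new cycle into the existing $\omega$-sorted order.
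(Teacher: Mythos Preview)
Your approach is genuinely different from the paper's, and as stated it has a real gap.

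The paper does not charge run-boundaries of $\bbwt(w)$ to anything. Instead it works per $c$-block: the range $x[i..j]$ of $x=\bbwt(w)$ corresponding to rotations starting with $c$ contains \emph{exactly} (for $\bwt$) or \emph{at most} (for $\bbwt$) $\rho_c(w)$ symbols different from $c$, because each such symbol is the cyclic predecessor of a maximal $c$-run inside some Lyndon factor, and Lyndon factors of length $\ge 2$ start and end at run-boundaries of $w$. With $m\le\rho_c(w)$ non-$c$ symbols in the block one gets $\rho(x[i..j])\le 2m+1$; the missing case in Mantaci et al.\ is precisely why $2m+1$ (block starting and ending with $c$) cannot occur. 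The paper rules it out by a monotonicity argument on $\Psi$: if $x[i]=x[j]=c$, then following each maximal $c$-run of $w$ through $\Psi$ gives a monotone walk inside $[i..j]$ that must terminate at a non-$c$ position, and the order-preservation of $\Psi$ on $c$-positions forces a fixed point $\Psi(k')=k'$. For $\bwt$ this contradicts the single-cycle structure; for $\bbwt$ the fixed point is a length-$1$ Lyndon factor $c$, which did \emph{not} contribute a non-$c$ symbol to the block, so $\rho(x[i..j])\le 2\rho_c(w)$ still holds. Summing over $c$ finishes.

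Your plan, by contrast, tries to inject internal boundaries (positions $i$ with $A[i-1]=A[i]$ but $B[i-1]\neq B[i]$) one-to-one into the set of cyclic run-boundaries of the Lyndon factors, a set you correctly bound by $\rho(w)$. But that injection cannot exist in general: if it did, you would prove the stronger inequality $\rho(\bbwt(w))\le \rho(w)+\sigma$, whereas the paper's per-block bound shows the number of internal boundaries can reach $2\rho(w)-\sigma$, which exceeds $\rho(w)$ whenever $\rho(w)>\sigma$. Concretely, each cyclic run-boundary corresponds to a single position $j$ with $B[j]\neq A[j]$, and such a $j$ can be adjacent to \emph{two} internal boundaries (on its left and on its right), so the natural map is at best $2$-to-$1$, yielding only $\rho(\bbwt(w))\le \sigma+2\rho(w)$. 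Your ``bigram change within some Lyndon factor'' is also not well-defined: the two rotations at an internal boundary may live in different Lyndon factors, so no single factor witnesses the change. Finally, the subtle case you flag (coinciding bigram witnesses from different boundaries) is not the gap the paper identifies; the actual missing case is the $c$-block starting \emph{and} ending with $c$, and it is resolved by the $\Psi$ fixed-point argument, not by a charging refinement.
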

\begin{proof}
    We first recall the arguments of~\cite{DBLP:journals/tcs/MantaciRRSV17} for BWT.
    Consider a range $[i..j]$ of $x = \bwt(w)$
    that correspond to rotations of $w$ that start with a symbol $c\in\Sigma$.
    Then,
    $x[i..j]$ will consist of
    $|w|_c - \rho_c(w)$ occurrences of $c$'s that correspond
    to those preceding a $c$ in the same run of $c$'s in $w$,
    and $\rho_c(w)$ occurrence of symbols not equal to $c$ that precede a maximal run of $c$'s in $w$.
    Thus, $\rho(x[i..j])$ is maximized
    when the $\rho_c(w)$ non-$c$ symbols in $x[i..j]$ are not adjacent to each other.
    Mantaci et al. argued that
    this implies
    $\rho(x[i..j]) \leq 2\rho_c(w)$
    summing up to $2\rho(w)$ for all $c$.
    However, they did not give a reason to why $\rho(x[i..j])$ could not be $2\rho_c(w)+1$ (summing up to $\rho(w)+\sigma$), which is possibly the case when $x[i..j]$ starts and ends with the symbol $c$.
    We show that in such a case, an
    LF mapping with a single cycle cannot be defined, and ultimately,
    $\rho(x(w)[i..j]) \leq 2\rho_c(w)$ holds.

    Let $w[i'..j']$ be a maximal run of symbol $c$ in $w$ and let
    $k$ be such that $x[k]$ corresponds to $w[j']$, i.e.,
    $k$ is the $\omega$-order rank of $\rot^{j'+1}(w)$.
    Since $w[j'+1]\neq c$, $k \not\in [i'..j']$.
    Also, since $w[i]=\cdots=w[j]=c$,
    the $\omega$-order ranks $k, \Psi_x(k), \ldots, \Psi^{j'-i'+1}_x(k)$,
    corresponding respectively to those of $\rot^{j'+1}(w), \ldots, \rot^{i'}(w)$,
    are in $[i..j]$ (except $k$)
    and monotonic, i.e., either
    $k < i\leq \Psi_x(k) < \cdots < \Psi^{j'-i'+1}_x(k) \leq j$ or
    $k > j \geq \Psi_x(k) > \cdots > \Psi^{j'-i'+1}_x(k)\geq i$ holds.
    Also, $x[\Psi^{j'-i'+1}_x(k)] = w[i'-1] \neq c$.
    On the other hand, for any $p,q$ s.t. $i\leq p<q \leq j$ and $x[p]=x[q]=c$,
    $\Psi_x(p) < \Psi_x(q)$ must hold
    implying that for any such $p,q$, it cannot be that
    $p$ is part of a monotonically decreasing cycle and
    $q$ is part of a monotonically increasing cycle.
    Since each of the above monotonic sequences end at a non-$c$ position in $x[i..j]$,
    there must be some position $k'\in [i..j]$ such that $\Psi_x(k') = k'$,
    and $\Psi_x$ cannot be defined to cover all positions in $[i..j]$ in a single cycle.
    Therefore $\rho(x[i..j]) \leq 2\rho_c(w)$ holds.

    For $x = \bbwt(w)$, similarly consider the range $[i..j]$ of $\bbwt(w)$ such that
    the corresponding rotations of the Lyndon factors of $w$
    start with the symbol $c\in\Sigma$.
    Since it is known that Lyndon factors of $w$ (except for single symbol factors) must start and end at run-boundaries of $w$~\cite{DBLP:conf/isaac/FujishigeNIBT17},
    a maximal run of $c$'s in $w$ is
    either a substring of a Lyndon factor of $w$, or a maximal run of a single symbol Lyndon factors.
    If it is a substring of a Lyndon factor but not a prefix, then there is one non-$c$ symbol per such run that precedes the run and occurs in $x[i..j]$.
    If it is a prefix of a Lyndon factor longer than $1$, then
    the previous symbol (or the last symbol of the Lyndon factor) is non-$c$
    due to the Lyndon factor not having a proper border.
    Otherwise, the maximal run is a maximal run of single symbol Lyndon factors $c$,
    in which case the previous symbol for these occurring in $x[i..j]$ will also be $c$.
    Thus, the arguments in the previous paragraph for BWT hold for BBWT as well, except for the last part:
    it can be that
    $x[i]=x[j]=c$ and
    $\Psi(k')=k'$ for some $k'\in[i,j]$.
    However, this implies that this
    corresponds to a single symbol
    Lyndon factor which
    would not have introduced a non-$c$ symbol in $x[i..j]$.
    Therefore
    $\rho(x[i..j])\leq 2\rho_c(w)$ still holds.
\end{proof}

\subsection{In terms of other repetitiveness measures}
We consider measuring the clustering effect of $\bwt$ and $\bbwt$ other than~$\rho$.
Here, we show that $\bwt$ or $\bbwt$
can only increase the repetitiveness of the string by a polylogarithmic factor.

\begin{theorem}\label{thm:clusteringEffectByOthers}
    For repetitiveness measure $\mu \in \{\delta,\gamma,b,v,z, g_{\mathrm{rl}}\}$
    and any string $w$ of length $n$,
    it holds that
    $\mu(\bwt(w)) = O(\mu(w)\log^2 n)$.
    Moreover,
    $\mu(\bbwt(w)) = O(\mu(w)\log^2 n)$
    for $\mu \in \{z, g_{\mathrm{rl}}\}$,
    and
    $\mu(\bbwt(w)) = O(\mu(w)\log^6 n)$
    for $\mu \in \{\delta,\gamma,b,v\}$.
\end{theorem}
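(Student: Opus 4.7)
The plan is to reduce the whole theorem to one uniform estimate: for every string $x$ and every measure $\mu$ in the statement, $\mu(x) = O(\rho(x))$. Granting this, the definitional identities $\rho(\bwt(w)) = r(w)$ and $\rho(\bbwt(w)) = r_B(w)$ immediately give $\mu(\bwt(w)) = O(r(w))$ and $\mu(\bbwt(w)) = O(r_B(w))$, and the rest of the theorem is just translating $r(w)$ or $r_B(w)$ back to $\mu(w)$ via polylogarithmic relations already established in the paper or in the cited literature.

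To prove the uniform bound $\mu(x) = O(\rho(x))$ I would split into two groups. For $\mu \in \{z, g_{\mathrm{rl}}\}$ a direct construction suffices: one run-length rule per maximal run of $x$ gives a run-length grammar of size $O(\rho(x))$, and one LZ77 phrase per run (using a single-symbol self-reference for runs of length $>1$) gives an LZ77 parse of size $O(\rho(x))$. For $\mu \in \{\delta, \gamma, b, v\}$ I would chain the elementary inequalities $\delta \le \gamma \le b$ with $b = O(r)$ and $v = O(r)$ from Navarro, Olivares and Prezza~\cite{DBLP:journals/tit/NavarroOP21}, and invoke \Cref{thm:bwtRLClustering} (the Mantaci et al.\ clustering bound $r(x) \le 2\rho(x)$) to conclude $\mu(x) = O(r(x)) = O(\rho(x))$.

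With this lemma in hand the BWT case is a one-liner: $\mu(\bwt(w)) = O(r(w)) = O(\delta(w) \log^2 n) = O(\mu(w) \log^2 n)$, where the middle equality is Kempa--Kociumaka and the last uses $\delta \le \mu$. The BBWT case splits into two subcases. For $\mu \in \{z, g_{\mathrm{rl}}\}$ I would invoke Badkobeh et al.'s $r_B(w) = O(z(w) \log^2 n)$ and (for $g_{\mathrm{rl}}$) $z \le g_{\mathrm{rl}}$, landing at $\log^2$. For $\mu \in \{\delta, \gamma, b, v\}$ I would instead compose \Cref{corollary:RvsRB} ($r_B(w) = O(r(w) \log^4 n)$) with Kempa--Kociumaka ($r(w) = O(\delta(w) \log^2 n)$) and $\delta \le \mu$, landing at $\log^6$.

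The proof is essentially bookkeeping once the reduction lemma $\mu(x) = O(\rho(x))$ is in place, so the only mildly delicate point is justifying that bound uniformly across all six measures. The least immediate case is $v$, where one has to import the Navarro--Olivares--Prezza construction of a lex-parse of size $O(r)$ from the RLBWT rather than argue directly from the definition.
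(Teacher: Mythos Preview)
Your proposal is correct and follows essentially the same route as the paper: reduce everything to $\mu(x)=O(\rho(x))$, then chain $\rho(\bwt(w))=r(w)=O(\delta(w)\log^2 n)$ for the $\bwt$ case, and for $\bbwt$ use $r_B=O(z\log^2 n)$ (for $z,g_{\mathrm{rl}}$) or $r_B=O(r\log^4 n)=O(\delta\log^6 n)$ via \Cref{corollary:RvsRB} (for $\delta,\gamma,b,v$). The only cosmetic difference is that the paper states $\mu(x)=O(\rho(x))$ as a ``simple observation'' without elaboration, whereas you justify it for $\delta,\gamma,b,v$ by the detour $\mu\le v,b=O(r)$ and $r(x)\le 2\rho(x)$ from \Cref{thm:bwtRLClustering}; this is perfectly valid, though one could also argue more directly (e.g., one position per run is already a string attractor, so $\gamma\le\rho$).
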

\begin{proof}
    The proof for $\bwt$ follows from a simple observation that
    for any string $w$, $\mu(w) = O(\rho(w))$
    implying  $\mu(\bwt(w)) = O(\rho(\bwt(w))$.
    Since $r(w) = \rho(\bwt(w)) = O(\delta(w)\log^2n)$,
    we have
    $\mu(\bwt(w)) = O(\delta(w)\log^2 n) = O(\mu(w)\log^2 n)$,
    where the last relation
    follows
    from the fact that $\delta$ lower bounds all the repetitiveness measures considered.

    For $\bbwt$,
    $\mu(\bbwt(w)) = O(\rho(\bbwt(w)))$ holds, but only
    $r_B(w) = \rho(\bbwt(w)) = O(z(w)\log^2 n)$ has been proved,
    from which
    $r_B(w) = O(g_{\mathrm{rl}}(w)\log^2 n)$ and the statement follows for $\mu\in\{z,g_{\mathrm{rl}}\}$.
    For the other measures,
    \Cref{corollary:RvsRB} gives us
    $r_B(w) = O(r(w)\log^4 n) =
        O(\delta(w)\log^6 n)$
    from which the statement follows.
\end{proof}

\subsection{A family of significantly ``clustered'' strings by BBWT}
Finally, we show an infinite family of strings for which
BBWT exhibits an asymptotically maximal clustering effect in terms of $\delta, \gamma, b, v, r$, and slightly weaker in terms of $z,g_{\mathrm{rl}},g$.
Namely,
we consider the family of strings whose
BBWT image are Fibonacci words.
\begin{restatable}{theorem}{bbwtTwice}\label{theorem:bbwt_twice}
    Let $w$ be a string where
    $\bbwt(w)$ is a Fibonacci word of length $n$.
    Then, for $\mu\in\{\delta,\gamma,b,v,r\}$,
    $\mu(\bbwt(w)) = O(\frac{\log n}{n}\mu(w))$
    and for
    $\mu\in\{z, g_{\mathrm{rl}},g \}$,
    $\mu(\bbwt(w)) = O(\frac{\log^2 n}{n}\mu(w))$.
\end{restatable}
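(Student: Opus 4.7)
The plan is to reduce the theorem to a single substring-complexity lower bound on $w = \bbwt^{-1}(F_i)$, where $F_i$ is the Fibonacci word of length $n = f_i$. It is well known that $\mu(F_i) = O(1)$ for $\mu \in \{\delta,\gamma,b,v,r\}$ and $\mu(F_i) = \Theta(\log n)$ for $\mu \in \{z,g_{\mathrm{rl}},g\}$, so the claimed ratios reduce to the single statement $\delta(w) = \Omega(n/\log n)$: since $\delta$ lower-bounds $\gamma$ by definition and $\gamma$ in turn lower-bounds $b,v,r,z,g_{\mathrm{rl}},g$ (each of these measures induces a string attractor of comparable size, as recalled in the introduction), this single inequality propagates to every measure appearing in the theorem.

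To lower-bound $\delta(w)$, I will aim to produce a length $k = \Theta(\log n)$ for which $w$ contains $S_k(w) = \Omega(n)$ distinct length-$k$ substrings; then $\delta(w) \geq S_k(w)/k = \Omega(n/\log n)$. Since $w = \bbwt^{-1}(F_i)$ is assembled from the cycles of the LF-map $\Psi_{F_i}$ (each cycle's Lyndon rotation is a Lyndon factor of $w$, and the factors are concatenated in non-increasing $\omega$-order), the crux of the proof is an explicit description of $\Psi_{F_i}$. This is where the Zeckendorf representation enters: because the sorted form of $F_i$ is $\mathtt{a}^{f_{i-1}}\mathtt{b}^{f_{i-2}}$, $\Psi_{F_i}(j)$ is simply the $\mathtt{a}$- or $\mathtt{b}$-rank of position $j$ in $F_i$ (shifted by $f_{i-1}$ in the $\mathtt{b}$-case), and the Sturmian structure of $F_i$ makes this rank a clean function of the Fibonacci-base digits of $j$ --- effectively a digit-shift.

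Concretely, I would first work out this Zeckendorf closed form for $\Psi_{F_i}$ and verify it by induction on $i$ using the recursion $F_i = F_{i-1}F_{i-2}$. Next, I would use the closed form to classify the cycles of $\Psi_{F_i}$, identifying the trivial fixed points corresponding to the extreme $\mathtt{a}$ and $\mathtt{b}$ of the sorted string and the long cycle (or pair of cycles, depending on the parity of $i$) that accounts for the bulk of $w$, and thereby write down $w$ explicitly together with its Lyndon factorization. Finally, I would count length-$k$ windows of $w$: since reading consecutive symbols along a cycle is precisely iterating the Zeckendorf-shift, and iterating $\Theta(\log n)$ times gives an injective map from starting positions to digit-sequences, distinct starting positions in the long cycle must yield distinct length-$k$ windows, producing $S_k(w) = \Omega(n)$ for a suitable $k = \Theta(\log n)$.

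The main obstacle I expect is the last step --- converting the algebraic Zeckendorf description of $\Psi_{F_i}$ into a rigorous distinct-substrings count on $w$. The cycle structure is sensitive to the parity of $i$ and to small-index edge cases, and when choosing length-$k$ windows one must either confine them to a single Lyndon factor or explicitly handle the $\omega$-order boundaries between factors so that boundary-crossing substrings are not spuriously counted as repeats. Overcoming these technicalities is, I believe, precisely the \emph{elegant characterization of the LF mapping} that the authors flag as the main combinatorial contribution; once in hand, the polylogarithmic transfer to the measures $z,g_{\mathrm{rl}},g$ follows by plugging in $\mu(F_i)=\Theta(\log n)$ in place of $O(1)$, giving the extra logarithmic factor in the second half of the theorem.
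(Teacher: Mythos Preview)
Your high-level reduction is exactly the paper's: establish $\mu(F_k)=O(1)$ (resp.\ $\Theta(\log n)$) on the Fibonacci side and then prove $\delta(w)=\Omega(n/\log n)$ via the Zeckendorf description of $\Psi_{F_k}$. However, your expectation about the cycle structure of $\Psi_{F_k}$ is wrong, and this is where your plan would break down.

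You anticipate ``the long cycle (or pair of cycles, depending on the parity of $i$) that accounts for the bulk of $w$'' together with a couple of trivial fixed points. In fact the opposite happens: once you show (as the paper does in Lemma~\ref{lemma:PsiOfFibIsRotation}) that $\Psi_{F_k}$ acts on a position $j$ as a $1$-bit or $2$-bit left rotation of the length-$k$ Zeckendorf string $Z_k(j)$, the cycles of $\Psi_{F_k}$ are exactly the conjugacy classes of length-$k$ binary strings with no ``$11$'' substring, i.e.\ binary necklaces avoiding $11$. There are
\[
C(k)=\frac{1}{k}\sum_{d\mid k}\varphi(k/d)(f_{d-2}+f_d)\ \geq\ f_k/k
\]
of them (for prime $k$ this is immediate), so $w$ has $\Theta(n/\log n)$ Lyndon factors, each of length at most $k-1$. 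There is no long cycle on which to run your injectivity-of-iterated-shift argument; every cycle is short.

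The paper's actual distinct-substrings count exploits precisely this abundance of short factors: since each nontrivial Lyndon factor $w'$ satisfies $|w'|_{\mathtt a}+2|w'|_{\mathtt b}=k$, no factor is a substring of another, and every length-$2k$ window of $w$ contains a complete Lyndon factor. Because Lyndon words occur only inside their own Lyndon factor, these windows are pairwise distinct, giving $S_{2k}(w)\geq f_k-2k$ and hence $\delta(w)=\Omega(n/\log n)$. Your proposal would need to replace the ``one long cycle'' picture with this many-short-cycles picture and the accompanying counting argument.
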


In order to understand $w$,
we first introduce the tools we use to
characterize the LF mapping $\Psi_{F_k}$ on Fibonacci words.

The Zeckendorf representation~\cite{1570009749187075840} of a non-negative integer is a
unique (sub-)set of distinct non-consecutive Fibonacci numbers $\{ f_k \mid k \geq 1 \}$ that sum up to the integer.
We represent a non-negative integer $i$ as a bit string $Z(i)$,
where $i = ||Z(i)|| = \sum_{j\geq 0} Z(i)[j]\cdot f_{j+1}$.
We use $Z_k(i)$ to denote the length-$k$ prefix $Z(i)[0..k-1]$ of $Z(i)$.
Note that for any $i \in [0,f_k)$,
it suffices that a subset of
$\{ f_1,\dots,f_{k-1}\}$ is used, and thus $Z(i)$ requires only up to $(k-1)$ bits,
i.e., $1$'s will only occur in $Z(i)[0..k-2]$, and the rest $Z(i)[k-1..]$ will be $0$.

We observe that length $i\geq 0$ prefix of a Fibonacci word can be uniquely factorized into a sequence of distinct non-consecutive Fibonacci words in order of decreasing length, where the length of each Fibonacci word corresponds to an element in the Zeckendorf representation of~$i$.
\begin{lemma}\label{lemma:fibonacciStringWithNumeration}
    For any $i \geq 0$,
    \begin{align}
        F_\infty[0..i) & = \prod_{j={k-2}}^0 F_{j+1}^{Z(i)[j]} = F_{k-1}^{Z(i)[k-2]}\cdots F_{1}^{Z(i)[0]},
    \end{align}
    where $k$ is the smallest integer such that $i < f_k$.
\end{lemma}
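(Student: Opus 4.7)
The plan is to prove this by strong induction on $i$, exploiting the recurrence $F_n = F_{n-1} F_{n-2}$ together with the uniqueness of the Zeckendorf representation. The intuition is that the largest Fibonacci number $f_{k-1}$ fitting inside $i$ corresponds to ``peeling off'' a $F_{k-1}$-prefix of $F_\infty$, after which the remaining prefix lies inside the $F_{k-2}$ block that follows (since $F_k = F_{k-1} F_{k-2}$), and induction handles the rest.

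For the base case $i = 0$, the smallest $k$ with $0 < f_k$ is $k = 0$, so the product over $j$ from $k-2 = -2$ down to $0$ is empty, matching $F_\infty[0..0) = \varepsilon$. For the inductive step with $i \geq 1$, let $k$ be the smallest integer such that $i < f_k$; equivalently, $f_{k-1} \leq i < f_k$, which forces $Z(i)[k-2] = 1$ (the bit weighted by $f_{k-1}$). Since every $F_n$ ($n \geq 1$) is a prefix of $F_\infty$, I would write
\[
    F_\infty[0..i) = F_{k-1} \cdot F_\infty[f_{k-1}..i).
\]
Next, using $F_k = F_{k-1} F_{k-2}$, the factor $F_\infty[f_{k-1}..f_k)$ equals $F_{k-2}$, and because $F_{k-2}$ is itself a prefix of $F_\infty$, the tail satisfies $F_\infty[f_{k-1}..i) = F_\infty[0..i')$ where $i' := i - f_{k-1}$ and $0 \leq i' < f_k - f_{k-1} = f_{k-2}$.

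Applying the induction hypothesis to $i'$ then yields the Zeckendorf product decomposition of $F_\infty[0..i')$ in terms of the set bits of $Z(i')$. Because $i' < f_{k-2}$, the representation $Z(i')$ has all bits at positions $\geq k-3$ equal to $0$, so prepending $F_{k-1}$ corresponds to setting precisely the bit at position $k-2$ of $Z(i)$ while keeping bit $k-3$ unset (so no consecutive Fibonacci numbers appear). Uniqueness of the Zeckendorf representation then identifies this bit pattern with $Z(i)$ itself, and the products match termwise.

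The main obstacle is purely notational rather than conceptual: one must keep the indexing of $Z$, $f$, and $F$ aligned (in particular, $Z(i)[j]$ carries weight $f_{j+1}$, while the product uses $F_{j+1}$), and must verify that the bound $i' < f_{k-2}$ truly blocks the ``forbidden'' consecutive bit $Z(i')[k-3] = 1$. Once that bookkeeping is handled, the lemma follows immediately, and no further combinatorial argument about the Fibonacci word is needed.
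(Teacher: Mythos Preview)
Your proof is correct and follows essentially the same approach as the paper: both arguments peel off the leading $F_{k-1}$ block, observe that the remaining prefix has length $i' = i - f_{k-1} < f_{k-2}$ (which rules out consecutive Fibonacci indices), and then recurse, invoking uniqueness of the Zeckendorf representation at the end. The only cosmetic difference is that the paper frames this as an iterative greedy factorization while you frame it as strong induction on $i$; the underlying decomposition and the key inequality $i' < f_{k-2}$ are identical.
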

\begin{proof}
    Consider a greedy factorization of $F_\infty[0..i)$ that
    takes the longest prefix of the remaining string that is a Fibonacci word.
    Let $k$ be the smallest integer such that $i \leq f_k$.
    If $i = f_k$ then we simply take $F_k$ as the last element.
    Otherwise, we take $F_{k-1}$.
    Notice that in this case, the remaining string is
    $F_\infty[f_{k-1}..i) = F_k[f_{k-1}..i) = F_{k-2}[0..i-f_{k-1})$ of length $i'=i - f_{k-1} < f_{k-2}$,
    and we repeat the process to find a greedy factorization of $F_{k-2}[0..i')$.
    The remaining string is a proper prefix of $F_{k-2}$
    and thus $F_{k-2}$ will not be chosen next, implying that consecutive
    Fibonacci words are not chosen.
    Thus, the lengths of the sequence of strings will be a set of non-consecutive set of distinct Fibonacci numbers that sum up to $i$.
    The lemma follows from the uniqueness of the Zeckendorf representation of~$i$.
\end{proof}

The following relation about the least significant bit in the Zeckendorf representation and the $i$th symbol in the
(infinite) Fibonacci word is known.
\begin{lemma}[Problem 6 in~\cite{Crochemore_Lecroq_Rytter_2021}]\label{lemma:firstBitOfZ}
    For $i = 0,1,\dots$,
    \begin{align*}
        F_\infty[i] & = \begin{cases}
                            \mathtt{a} & \mbox{if } Z(i)[0] = 0, \\
                            \mathtt{b} & \mbox{if } Z(i)[0] = 1.
                        \end{cases}
    \end{align*}
\end{lemma}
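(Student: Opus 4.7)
The plan is to proceed by strong induction on $i$, using \Cref{lemma:fibonacciStringWithNumeration} as the structural tool. The intuition is clean: the least significant bit of $Z(i)$ records whether the final factor in the decomposition of $F_\infty[0..i)$ is $F_1 = \mathtt{a}$, which should then pin down the value of the next symbol via the basic recurrence $F_k = F_{k-1}F_{k-2}$.

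The base cases will be $i = 0$ and $i = 1$, verified by direct inspection: $F_\infty[0] = \mathtt{a}$ with $Z(0)[0] = 0$, and $F_\infty[1] = \mathtt{b}$ with $Z(1)[0] = 1$. For the inductive step with $i \geq 2$, I would let $k$ be the smallest integer such that $i < f_k$; since $i \geq 2 > f_1$, this forces $k \geq 3$, so that $f_{k-1} \leq i$ and $Z(i)[k-2] = 1$ is the most significant bit of $Z(i)$. Setting $i' = i - f_{k-1} < f_{k-2}$, the identity $F_\infty[0..f_k) = F_k = F_{k-1}F_{k-2}$ immediately gives $F_\infty[i] = F_{k-2}[i']$. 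Because $k - 2 \geq 1$, the word $F_{k-2}$ is itself a prefix of $F_\infty$, so $F_\infty[i] = F_\infty[i']$. On the Zeckendorf side, $Z(i')$ is obtained from $Z(i)$ by clearing only the bit at position $k - 2 \geq 1$, hence $Z(i')[0] = Z(i)[0]$, and the inductive hypothesis applied at $i' < i$ closes the argument.

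The only mildly subtle point, and the reason $i = 1$ must be treated as a base case rather than folded into the recursion, is the restriction $k - 2 \geq 1$: when $k = 2$ one would have to rely on $F_0 = \mathtt{b}$ being a prefix of $F_\infty$, which it is not. I expect this is the sole obstacle; given \Cref{lemma:fibonacciStringWithNumeration} and the Fibonacci recurrence, the rest of the argument is mechanical.
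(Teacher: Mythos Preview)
Your argument is correct. The base cases are handled properly, and the inductive step is sound: for $i \geq 2$ the minimality of $k$ together with $f_{k-1} < f_k$ (valid since $k \geq 3$) indeed gives $f_{k-1} \leq i < f_k$, so the leading Zeckendorf digit sits at position $k-2 \geq 1$ and can be stripped without touching bit~$0$. The identification $F_\infty[i] = F_{k-2}[i'] = F_\infty[i']$ is justified because $F_{k-2}$ with $k-2 \geq 1$ is a genuine prefix of $F_\infty$, and you correctly isolate $i=1$ as a separate base case precisely to avoid the degenerate $F_0 = \mathtt{b}$.

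There is no proof in the paper to compare against: the lemma is quoted as a known result (Problem~6 in the cited textbook) and used without argument. Your write-up therefore supplies something the paper omits. One minor remark: although you frame the plan around \Cref{lemma:fibonacciStringWithNumeration}, the inductive step you actually carry out relies only on the recurrence $F_k = F_{k-1}F_{k-2}$ and the prefix property of $F_{k-2}$, not on the full factorization statement. This is fine, and arguably cleaner, but you may want to adjust the opening sentence of the plan accordingly.
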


Now, we make observations on the LF-mapping for Fibonacci words.

\begin{lemma}\label{lemma:PsiOfFibonacci}
    The LF-mapping function for the Fibonacci word $F_k$ is, for any $0 \leq i < f_k$,
    \[
        \Psi_{F_k}(i) = \begin{cases}
            \rank_\mathtt{a}(i,F_k)           & \mbox{if } F_k[i] = \mathtt{a}, \\
            f_{k-1} + \rank_\mathtt{b}(i,F_k) & \mbox{if } F_k[i] = \mathtt{b}.
        \end{cases}
    \]
\end{lemma}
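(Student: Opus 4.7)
The plan is simply to unpack the definition of the LF-mapping $\Psi_x$ in the special case $x = F_k$. The first step is to identify the sorted string $s$ that appears in the definition of $\Psi_x$. By \Cref{observation:NumberOfAorB}, $F_k$ contains $f_{k-1}$ occurrences of $\mathtt{a}$ and $f_{k-2}$ occurrences of $\mathtt{b}$, so sorting its multiset of symbols in increasing order yields $s = \mathtt{a}^{f_{k-1}} \mathtt{b}^{f_{k-2}}$. This sorted string has a very transparent rank structure: for $j \in [0, f_{k-1})$ we have $s[j] = \mathtt{a}$ and $\rank_\mathtt{a}(j, s) = j$, while for $j \in [f_{k-1}, f_k)$ we have $s[j] = \mathtt{b}$ and $\rank_\mathtt{b}(j, s) = j - f_{k-1}$.

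The rest is a case analysis on the value of $F_k[i]$. If $F_k[i] = \mathtt{a}$, the conditions $s[j] = \mathtt{a}$ and $\rank_\mathtt{a}(j, s) = \rank_\mathtt{a}(i, F_k)$ that define $j = \Psi_{F_k}(i)$ immediately force $j \in [0, f_{k-1})$ together with $j = \rank_\mathtt{a}(i, F_k)$. Symmetrically, if $F_k[i] = \mathtt{b}$, then $s[j] = \mathtt{b}$ and $\rank_\mathtt{b}(j, s) = \rank_\mathtt{b}(i, F_k)$ force $j - f_{k-1} = \rank_\mathtt{b}(i, F_k)$, i.e., $j = f_{k-1} + \rank_\mathtt{b}(i, F_k)$. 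Both branches match the stated formula.

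There is essentially no obstacle to the proof itself: the lemma is a direct application of the definition of $\Psi_x$ in the special situation where the sorted alphabet splits $s$ into one block of $\mathtt{a}$'s followed by one block of $\mathtt{b}$'s. The main technical difficulty of this section will not live in this lemma, but rather in combining the clean formula for $\Psi_{F_k}$ with the Zeckendorf tools \Cref{lemma:fibonacciStringWithNumeration} and \Cref{lemma:firstBitOfZ} to describe the cycles of $\Psi_{F_k}$ and thereby reconstruct the preimage string $w$ with $\bbwt(w) = F_k$, which is what \Cref{theorem:bbwt_twice} ultimately relies on.
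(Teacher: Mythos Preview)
Your proposal is correct and follows exactly the approach the paper takes: the paper's proof is the single line ``Straightforward from the definition of $\Psi_{F_k}$ and \Cref{observation:NumberOfAorB}'', and you have simply written out those details explicitly. Your forward-looking commentary about where the real work lies (the Zeckendorf analysis feeding into \Cref{theorem:bbwt_twice}) is also accurate.
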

\begin{proof}
    Straightforward from the definition of $\Psi_{F_k}$ and \Cref{observation:NumberOfAorB}.
\end{proof}

The next lemma shows that the LF mappings on $F_k$ can be interpreted
as rotation operations on the Zeckendorf representation of the position to be mapped.
\begin{lemma}\label{lemma:PsiOfFibIsRotation}
    For any $i \in [0,f_k)$,
    \begin{align*}
        Z_k(\Psi_{F_k}(i)) & = \begin{cases}
                                   \rot(Z_k(i))   & \text{if $F_k[i] = \mathtt{a}$}, \\
                                   \rot^2(Z_k(i)) & \text{if $F_k[i] = \mathtt{b}$.}
                               \end{cases}
    \end{align*}
\end{lemma}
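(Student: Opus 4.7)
My plan is to unfold both sides of the claimed identity into explicit sums over the Zeckendorf bits $b_j := Z(i)[j]$ of $i$ and verify they coincide, handling the two cases of Lemma~\ref{lemma:PsiOfFibonacci} separately. The ingredients are already in hand: Lemma~\ref{lemma:fibonacciStringWithNumeration} gives a clean factorization of $F_k[0..i)$, Lemma~\ref{lemma:firstBitOfZ} lets me read off $F_k[i]$ from $b_0$, and Lemma~\ref{lemma:PsiOfFibonacci} supplies the closed form for $\Psi_{F_k}(i)$.

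First I would compute the symbol-rank right-hand sides. By Lemma~\ref{lemma:fibonacciStringWithNumeration}, $F_k[0..i)$ is the concatenation of the Fibonacci words $F_{j+1}$ over those indices $j \in \{0,\ldots,k-2\}$ with $b_j = 1$. Combining this with Observation~\ref{observation:NumberOfAorB} immediately yields
\[
    \rank_\mathtt{a}(i,F_k) \;=\; \sum_{j=0}^{k-2} b_j f_j,
    \qquad
    \rank_\mathtt{b}(i,F_k) \;=\; \sum_{j=0}^{k-2} b_j f_{j-1},
\]
with the convention $f_{-1} = 0$. The only bookkeeping here is the index shift: bit $b_j$ weights $f_{j+1}$ in the value of $i$, whereas $F_{j+1}$ contributes $f_j$ many $\mathtt{a}$'s and $f_{j-1}$ many $\mathtt{b}$'s.

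Next I would evaluate the left-hand side in each branch. For $F_k[i] = \mathtt{a}$, Lemma~\ref{lemma:firstBitOfZ} gives $b_0 = 0$, so expanding the integer value of $\rot(Z_k(i)) = b_1 b_2 \cdots b_{k-1} b_0$ yields $\sum_{j=1}^{k-1} b_j f_j + b_0 f_k$, which collapses to $\sum_{j=1}^{k-2} b_j f_j$ using $b_0 = 0$ together with $b_{k-1} = 0$ (a consequence of $i < f_k$). This matches $\rank_\mathtt{a}(i,F_k)$ after the trivial $j=0$ term drops. For $F_k[i] = \mathtt{b}$, Lemma~\ref{lemma:firstBitOfZ} gives $b_0 = 1$, and then the non-consecutiveness of the Zeckendorf representation forces $b_1 = 0$. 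Expanding $\rot^2(Z_k(i)) = b_2 \cdots b_{k-1} b_0 b_1$ yields $\sum_{j=2}^{k-1} b_j f_{j-1} + b_0 f_{k-1} + b_1 f_k$, which collapses to $f_{k-1} + \sum_{j=2}^{k-2} b_j f_{j-1}$, matching $f_{k-1} + \rank_\mathtt{b}(i,F_k)$ via Lemma~\ref{lemma:PsiOfFibonacci}.

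There is no real obstacle here: the only delicate point is that bits pushed past the end of the length-$k$ window by the rotation acquire the new weight $f_k$ and would in principle create spurious contributions, but in each case exactly the right bits ($b_0$ for one rotation, $b_1$ for two rotations) are forced to vanish. The vanishing of $b_0$ is the very condition selecting the $\mathtt{a}$-branch, and the vanishing of $b_1$ in the $\mathtt{b}$-branch follows from the non-consecutive property of Zeckendorf representations given $b_0 = 1$. Once these zeros are noted, the identity reduces to a direct reindexing of the two sums.
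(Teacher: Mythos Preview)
Your proposal is correct and follows essentially the same route as the paper's own proof: both use Lemma~\ref{lemma:fibonacciStringWithNumeration} with Observation~\ref{observation:NumberOfAorB} to express $\rank_\mathtt{a}(i,F_k)$ and $\rank_\mathtt{b}(i,F_k)$ as $\sum_j b_j f_j$ and $\sum_j b_j f_{j-1}$, invoke Lemma~\ref{lemma:firstBitOfZ} and the non-consecutiveness of Zeckendorf digits to pin down $b_0$, $b_1$, $b_{k-1}$, and then match these sums against the expansion of $||\rot(Z_k(i))||$ and $||\rot^2(Z_k(i))||$. The only cosmetic difference is direction: the paper starts from $\Psi_{F_k}(i)$ and rewrites it into $||\rot^\ell(Z_k(i))||$, whereas you expand both sides and meet in the middle; the underlying index manipulations are identical.
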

\begin{proof}
Let $j = \Psi_{F_k}(i)$.
    Since $i, j\in [0,f_{k})$, the most significant bits
    of $Z_k(i)$ and $Z_k(j)$, corresponding to $f_{k}$, are always zero, i.e.,
    $Z_k(i)[k-1]= Z_k(j)[k-1] = 0$.

    From Lemma~\ref{lemma:firstBitOfZ}, $F_k[i] = \mathtt{a}$ implies that $Z_k(i)[0] = 0$.
    Thus, $\rot(Z_k(i)) = Z_k(i)[1..k-1]\cdot 0$.
Therefore, we have
    \begin{align*}
        \Psi_{F_k}(i) & =\rank_\mathtt{a}(i,F_k)                                                &  & \text{by \Cref{lemma:PsiOfFibonacci}}                \\
                      & =\left|\prod_{j=k-2}^0F_{j+1}^{Z(i)[j]}\right|_\mathtt{a}               &  & \text{by \Cref{lemma:fibonacciStringWithNumeration}} \\
                      & =\sum_{j=0}^{k-2}Z_k(i)[j]\cdot f_j =\sum_{j=0}^{k-1}Z_k(i)[j]\cdot f_j                                                           \\
                      & =Z_k(i)[0]\cdot f_k + \sum_{j=0}^{k-2}Z_k(i)[j+1]\cdot f_{j+1}                                                                    \\
                      & =\sum_{j=0}^{k-1}Z_k(i)[(j+1)\bmod k]\cdot f_{j+1}                                                                                \\
                      & =||\rot(Z_k(i))||
    \end{align*}
    On the other hand, again from \Cref{lemma:firstBitOfZ},
    $F_k[i] = \mathtt{b}$ implies that $Z_k(i)[0] = 1$
    and since the Zeckendorf representation does not use consecutive Fibonacci numbers,
    $Z_k(i)[1]=0$.
    Thus, $\rot^2(Z_k(i)) = Z_k(i)[2..k-1]\cdot 1\cdot0$.
    Therefore, we have
\begin{align*}
        \rank_\mathtt{b}(i,F_k)
         & =\left|\prod_{j=k-2}^0F_{j+1}^{Z(i)[j]}\right|_\mathtt{b}                      &  & \text{by \Cref{lemma:fibonacciStringWithNumeration}} \\
         & =\sum_{j=0}^{k-2}Z_k(i)[j]\cdot f_{j-1}=\sum_{j=0}^{k-1}Z_k(i)[j]\cdot f_{j-1}                                                           \\
         & =Z_k(i)[0]\cdot f_{-1} + Z_k(i)[1]\cdot f_{0} +
        \sum_{j=0}^{k-3}Z_k(i)[j+2]\cdot f_{j+1}                                                                                                    \\
         & = \sum_{j=0}^{k-3}Z_k(i)[j+2]\cdot f_{j+1}
    \end{align*}
    Furthermore,
    \begin{align*}
        \Psi_{F_k}(i) & =\rank_\mathtt{b}(i,F_k) + f_{k-1}                                                         &  & \text{by~\Cref{lemma:PsiOfFibonacci}} \\
                      & = \sum_{j=0}^{k-3}Z_k(i)[j+2]\cdot f_{j+1} + f_{k-1}                                                                                  \\
                      & = \sum_{j=0}^{k-3}Z_k(i)[j+2]\cdot f_{j+1} + Z_k(i)[0]\cdot f_{k-1} + Z_k(i)[1]\cdot f_{k}                                            \\
                      & = \sum_{j=0}^{k-1}Z_k(i)[(j+2)\bmod k]\cdot f_{j+1}                                                                                   \\
                      & = ||\rot^2(Z_k(i))||.
    \end{align*}
    thus proving the lemma (see also~\Cref{fig:Fib-LFmap}).
\end{proof}

\begin{figure}[t]
    \begin{center}
        \includegraphics[width=0.8\textwidth]{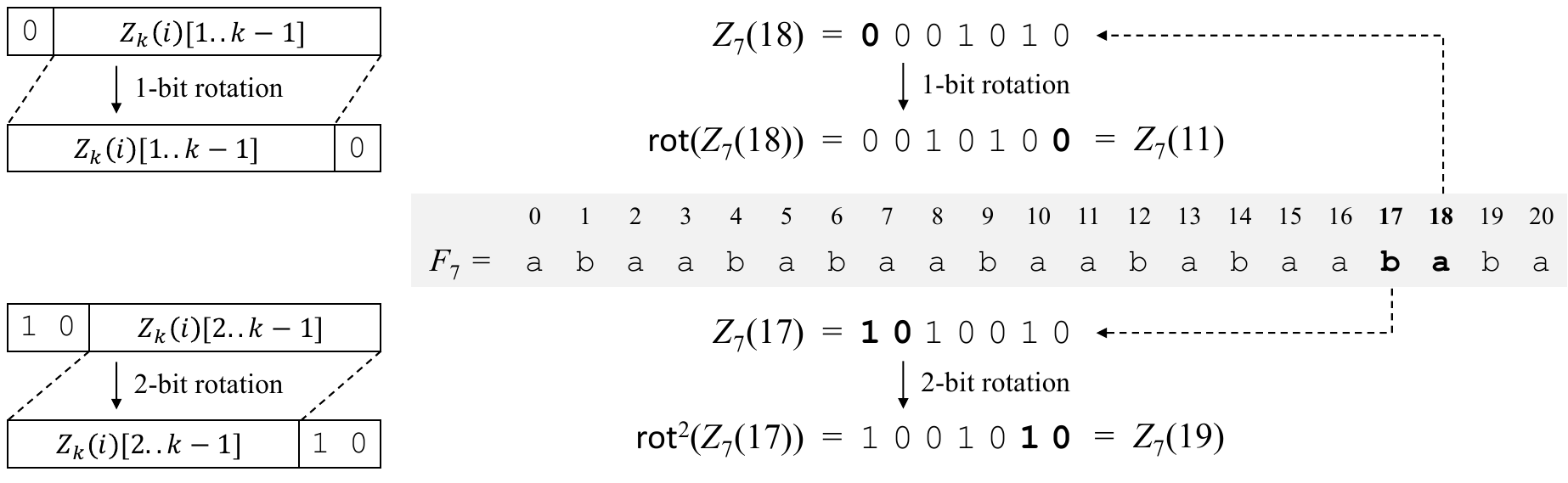}
        \caption{
        Example of Lemma~\ref{lemma:PsiOfFibIsRotation} for $F_7$.
        The $\mathtt{a}$ at position $18$ is the $12$th $\mathtt{a}$ in $F_7[0..18]$,
        and thus the LF mapping should point to position $11$, whose Zeckendorf representation $Z_7(11)$ is a
        1-bit left rotation of $Z_7(18)$.
        The $\mathtt{b}$ at position $17$ is the $7$th $\mathtt{b}$ in $F_7[0..17]$,
        and since there are $13$ $\mathtt{a}$'s in $F_7$, the LF mapping should point to position $13+7-1=19$,
        whose Zeckendorf representation $Z_7(19)$ is a 2-bit left rotation of $Z_7(17)$.
        }
        \label{fig:Fib-LFmap}
    \end{center}
\end{figure}

We are now ready to prove~\Cref{theorem:bbwt_twice}:
\bbwtTwice*
\begin{proof}
    For any Fibonacci word $F_k$,
    $\delta(F_k) \leq \gamma(F_k) \leq b(F_k) \leq v(F_k) \leq r(F_k) = 2$
    and $z(F_k),g_{\mathrm{rl}}(F_k),g(F_k) = O(\log n)$ are known.
    In the rest of the proof, we show
    $\delta(w) = \Omega(n/\log n)$ implying
    the same lower bound for all other measures
    which finishes the proof.

    From Lemma~\ref{lemma:PsiOfFibIsRotation}, $\Psi_{F_k}$ can be interpreted as a rotation operation on a $k$-bit string corresponding to the Zeckendorf representation of the given position.
    Therefore, the number of cycles that $\Psi_{F_k}$ produces is equivalent to the number of conjugacy classes among
    bit strings corresponding to the Zeckendorf representations of the set of positions $[0,F_k)$.
    Since the Zeckendorf representations do not use adjacent Fibonacci numbers,
    this is equivalent to the number $C(k)$ of binary necklaces of length $k$ that do not contain ``11''.
    This corresponds to entry A000358 in the on-line encyclopedia of integer sequences,
    which is known~\cite{bona2015handbook} to be:
    \begin{align}\label{eqn:ck}
        C(k) & = \frac{1}{k}\sum_{d|k}\varphi(k/d)(f_{d-2}+f_{d}),
    \end{align}
    where
    $d|k$ represents that $d$ is a divisor of $k$,
    and $\varphi$
    is Euler's totient function which returns,
    for integer $n$, the number of positive integers less than $n$ that are relatively prime to $n$ (i.e., having $\gcd$ of $1$).

    As seen in the proof of Lemma~\ref{lemma:PsiOfFibIsRotation},
    an $\mathtt{a}$ in $F_k$ implies
    a 1-bit left rotation on $Z_k$ for $\Psi_{F_k}$,
    and $\mathtt{b}$ in $F_k$ implies
    a 2-bit left rotation on $Z_k$ for $\Psi_{F_k}$.
    For simplicity of the analysis, we will restrict $k$ to being prime.
    Each necklace will then correspond to some cyclic string $w'$, where
    \begin{align}
        |w'|_\mathtt{a} +2|w'|_\mathtt{b} & = k, \label{eqn:absumsto:k}
    \end{align}
    except for the string $\mathtt{a}$ corresponding to the bit string $0^k$,
    since all other necklaces must be primitive and a total of $k$-bits must be rotated in order to come back to the same bit string.
    Consider the lexicographically smallest rotations of all of them, which are Lyndon words.
    All of them are distinct,
    and any two of them (except for the single $\mathtt{a}$)
    cannot be a substring of the other due to the constraint on their lengths (\Cref{eqn:absumsto:k}).
    The string
    $w = \bbwt^{-1}(F_k)$ is a concatenation (in non-increasing order) of all
    these strings with the single $\mathtt{a}$ appended at the end,
    corresponding to the Lyndon factorization of $w$.
    Since Lyndon words
    can only occur as a substring of a Lyndon factor,
    it follows that
    there are
    $\Omega(C(k))$ distinct uniquely occurring strings (Lyndon factors) that do not overlap.
For prime $k$, \Cref{eqn:ck} becomes:
    \begin{align*}
        C(k) & = \frac{1}{k} \left(\varphi(1)(f_{k-2}+f_{k}) + \varphi(k)(f_{-1}+f_1)\right)
        = \frac{1}{k} (f_{k-2}+f_{k} +k-1) \geq f_k/k.
    \end{align*}

    Since the uniquely occurring Lyndon factors of $w$ satisfy
    \Cref{eqn:absumsto:k} and have
    length at most
    $k-1$, any length $2k$ substring of $w$ will contain at least one full occurrence of a Lyndon factor of $w$.
    Since all such strings must also have unique occurrences and therefore be distinct,
    $\delta(w) \geq (f_k-2k)/2k = \Omega(n/\log n)$ holds.
\end{proof}

For example, for $\bbwt^{-1}(F_7)$,
the binary necklaces of length $7$ that do not contain ``11'' are:
$0000000$, $0000001$, $0000101$, $0001001$, $0010101$,
which respectively correspond to
$\mathtt{a}$, $\mathtt{aaaaab}$, $\mathtt{aaabb}$,
$\mathtt{aabab}$, and $\mathtt{abbb}$, and therefore,
$\bbwt^{-1}(F_7) = \mathtt{abbbaababaaabbaaaaaba}$.

We note that $\bbwt^{-1}(F_k)$ is
asymptotically maximally incompressible
by dictionary compression since
it is known that for any string,
$\delta \leq \gamma \leq z = O(n/\log_\sigma n)$ holds,
which is $O(n/\log n)$ for binary strings.

\section{Discussion}

The $\log$ factors in our bounds, especially in \Cref{thm:clusteringEffectByOthers}
are most likely not tight;
Our focus here was on showing the existence of such bounds.
A natural open question is how many $\log$ factors can be shaved.

\clearpage
\bibliography{refs}

\end{document}